\documentclass[british,orivec]{llncs}

\usepackage[utf8]{inputenc}

\usepackage{amsmath}
\usepackage{amsfonts}
\usepackage{amssymb}
\usepackage{stmaryrd}
\SetSymbolFont{stmry}{bold}{U}{stmry}{m}{n}
\usepackage{bm}
\usepackage{mathtools}
\usepackage{centernot}
\usepackage{scalerel}
\usepackage{lscape}
\usepackage{mathrsfs}

\usepackage{hyperref}
\usepackage[capitalise]{cleveref}
\allowdisplaybreaks

\makeatletter
\renewcommand{\xRightarrow}[2][]{\ext@arrow 0359\Rightarrowfill@{#1}{#2}}
\makeatother

\usepackage{array}

\usepackage[inline]{enumitem}

\usepackage{graphicx}
\usepackage{tikz}
\usetikzlibrary{arrows,positioning,arrows.meta,decorations.pathmorphing}
\tikzset{
  squiggly/.style = {
    line join=round,
    decorate, decoration={
      zigzag,
      segment length=4,
      amplitude=.9,post=lineto,
      post length=2pt}
  }
}

\usepackage{breakurl}

\usepackage{proof-dashed}

\newcounter{rule}

\crefname{infrule}{rule}{rules}
\Crefname{infrule}{Rule}{Rules}

\usepackage{cite}

\def\mod{\mathbin{\mathrm{mod}}}

\newcommand{\m}[1]{\ensuremath{\mathsf{#1}}}
\newcommand{\til}{\vec}
\newcommand{\pid}[1]{\m{#1}}
\newcommand{\pids}[1]{\ensuremath{\til{\pid #1}}}
\newcommand{\nil}{\ensuremath{\boldsymbol{0}}}
\newcommand{\com}[4]{\ensuremath{\pid{#1}.{#2} \tto\pid{#3}.{#4}}}
\newcommand{\lbl}[1]{\textsc{#1}}
\newcommand{\albl}{\lbl{l}}
\newcommand{\sel}[3]{\ensuremath{\pid{#1}\tto\pid{#2}[#3]}}
\newcommand{\gencom}{\ensuremath{\com peqx}}
\newcommand{\gensel}{\ensuremath{\sel pq\albl}}
\newcommand{\cond}[4]{\condif \pid{#1}.{#2} \condthen #3 \condelse #4}
\newcommand{\condif}{\mathop{\m{if}}}
\newcommand{\condthen}{\mathop{\m{then}}}
\newcommand{\condelse}{\mathop{\m{else}}}
\newcommand{\gencond}{\ensuremath{\cond pb{C_1}{C_2}}}
\newcommand{\cont}[3]{{\ensuremath{\lceil{#2},{#1}\rfloor{#3}}}}
\newcommand{\gencont}{\ensuremath{\cont X{\pids q}C}}
\newcommand{\passign}[2]{#1 \mathbin{\coloneqq} #2}
\newcommand{\assign}[3]{\pid{#1}.\passign{#2}{#3}}
\newcommand{\cassign}[3]{#1.\passign{#2}{#3}}
\newcommand{\gencassign}{\ensuremath{\cassign{\pid p}xe}}
\newcommand{\cdefs}{\ensuremath{\mathscr{C}}}
\newcommand{\tuple}[1]{\ensuremath{\langle #1 \rangle}}
\newcommand{\lto}[1]{\ltoc{#1}\cdefs}
\newcommand{\ltoc}[2]{\xrightarrow{#1}_{#2}}
\newcommand{\mlto}{\ensuremath{\to_\cdefs^\ast}}
\newcommand{\rname}[2]{{\small\textsc{#1}\ensuremath{|}\textsc{#2}}}
\newcommand{\tto}{\ensuremath{\mathbin{\boldsymbol{\rightarrow}}}}
\newcommand{\true}{\m{true}}
\newcommand{\false}{\m{false}}
\newcommand{\pn}{\mathop{\mathsf{pn}}}
\newcommand{\coml}[3]{\ensuremath{\pid{#1}.{#2} \tto\pid{#3}}}
\newcommand{\gencoml}{\ensuremath{\coml pvq}}
\newcommand{\condl}[1]{\tau@{\pid{#1}}}
\newcommand{\gencondl}{\ensuremath{\condl p}}
\newcommand{\allpids}{\ensuremath{\mathcal P}}
\newcommand{\cprops}{\ensuremath{\mathfrak{C}}}
\newcommand{\hto}[2][\cdefs]{\mathrel{\xRightarrow{#2}_{#1}}}
\newcommand{\mhto}{\ensuremath{\Rightarrow_\cdefs^\ast}}
\newcommand{\wlp}[1][\cprops]{\ensuremath{\mathsf{wlp}_{#1}}}
\newcommand{\logeq}[1]{\mathrel{\stackrel{\logvar{#1}}=}}
\newcommand{\disjoint}{\mathbin{\#}}
\newcommand{\hoare}[3]{\ensuremath{\{#1\}{#2}\{#3\}}}
\newcommand{\logvar}[1]{\ensuremath{\mathcal #1}}
\newcommand{\decth}{\ensuremath{\mathfrak D}}
\newcommand{\subst}[5]{\ensuremath{{#1}[\pid{#2}.{#3}:=\pid{#4}.{#5}]}}
\newcommand{\loc}[2]{\ensuremath{L(\pid{#1},{#2})}}
\newcommand{\eval}[4]{\ensuremath{{#1}\downarrow_{{#2}(\pid{#3})}{#4}}}
\newcommand{\rulebreak}{\\[1em]}
\newcommand{\eoe}{\hspace*\fill\ensuremath{\triangleleft}}

\usepackage{orcidlink}
\def\orcidID#1{\unskip$^{\orcidlink{#1}}$}

\begin{document}
\title{Reasoning about Choreographic Programs}

\author{Luís Cruz-Filipe \orcidID{0000-0002-7866-7484} \and Eva Graversen \orcidID{0000-0002-9430-4907} \and Fabrizio Montesi \orcidID{0000-0003-4666-901X} \and Marco Peressotti \orcidID{0000-0002-0243-0480}}

\institute{Department of Mathematics and Computer Science, University of Southern Denmark}%

\authorrunning{L.~Cruz-Filipe et al.}

\maketitle
\begin{abstract}
  Choreographic programming is a paradigm where a concurrent or distributed system is developed in a top-down fashion.
  Programs, called choreographies, detail the desired interactions between processes, and can be compiled to distributed implementations based on message passing.
  Choreographic languages usually guarantee deadlock-freedom and provide an operational correspondence between choreographies and their compiled implementations, but until now little work has been done on verifying other properties.

  This paper presents a Hoare-style logic for reasoning about the behaviour of choreographies, and illustrate its usage in representative examples.
  We show that this logic is sound and complete, and discuss decidability of its judgements.
  Using existing results from choreographic programming, we show that any functional correctness property proven for a choreography also holds for its compiled implementation.
\end{abstract}

\section{Introduction}
Programming communicating systems is hard, because of the challenge of ensuring that separate communication actions (like sending or receiving a message) executed by independent programs match each other correctly at runtime~\cite{LLLG16}.

In the paradigm of \emph{choreographic programming}~\cite{M13p}, this challenge is tackled by providing high-level abstractions that allow programmers to express the desired flow of communications safely from a `global' viewpoint~\cite{CM13,CGLMP22,CM17,DGGLM17,GMPRSW21,HG22,JV22,LNN16,M23}.
In a choreography program, or \emph{choreography}, communication is expressed in some variation of the communication term from security protocol notation, $\pid{Alice} \mathbin{\texttt{->}} \pid{Bob}\colon M$, which reads ``$\pid{Alice}$ communicates the message $M$ to $\pid{Bob}$''~\cite{NS78}.
These terms can be composed in structured choreographies using common programming language constructs.
Then, a compiler can automatically generate an executable distributed implementation~\cite{CM13,DGGLM17,GMP20}, as depicted in \cref{fig:chorintro}.

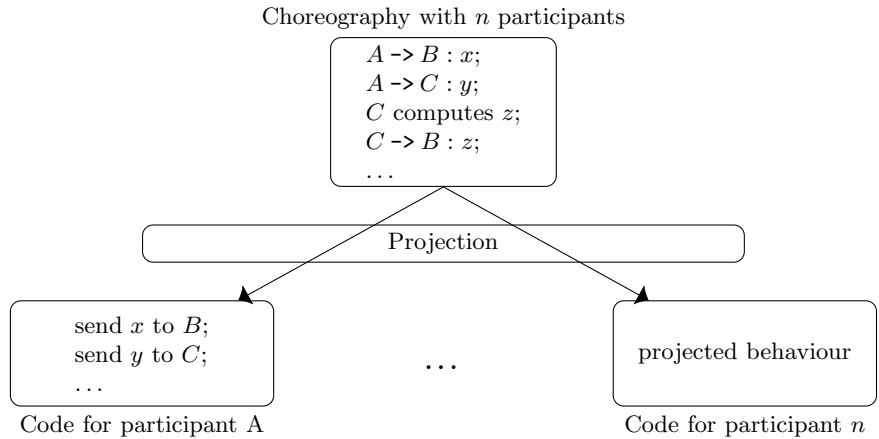
\begin{figure}
  \centering
  \begin{tikzpicture}[font=\footnotesize]
  \node [
    rectangle, rounded corners, draw,
    minimum width=3cm,
    label=above:Choreography with $n$ participants,
    align=left,
  ] (chor) {$A \mathbin{\texttt{->}} B:x;$ \\ $A \mathbin{\texttt{->}} C:y;$ \\ $C$ computes $z$; \\ $C \mathbin{\texttt{->}} B:z;$ \\ \dots};
  
  \node [
    rectangle, rounded corners, draw,
    below=0.5cm of chor,
    minimum width=8cm,
  ] (proj) {Projection};
  
  \node [
    rectangle, rounded corners, draw,
    below=0.5cm of proj.south west,
    minimum width=3.5cm,
    minimum height=1.4cm,
    label=below:Code for participant A,
    align=left,
  ] (A) {send $x$ to $B$;\\ send $y$ to $C$; \\ \dots};
  
  \node [
    below=1.25cm of proj,
  ] (dot) {\textbf{\dots }};
  
  \node [
    rectangle, rounded corners, draw,
    below=0.5cm of proj.south east,
    minimum width=3.5cm,
    minimum height=1.4cm,
    label=below:Code for participant $n$,
  ] (N) {projected behaviour};
  
  \draw[-{Latex[length=1.5mm,width=3mm]}] (chor.south) to (A);
  \draw[-{Latex[length=1.5mm,width=3mm]}] (chor.south) to (N);
  
  \end{tikzpicture}
  \caption{Choreographic programming: the communication and computation behaviour of a system is defined in a choreography, which is then projected (compiled) to deadlock-free distributed code (adapted from \cite{GMPRSW21}).}\label{fig:chorintro}
\end{figure}

So far, research on choreographic programming has mostly focused on improving the expressivity of choreographic programming languages, their implementation, and the formalisation of general properties about compilation.
Theory of choreographic programming typically comes with proofs of correctness of the accompanying compilation procedure.
A hallmark result is \emph{deadlock-freedom by design}: since mismatched communication actions cannot be syntactically expressed in choreographies, the compiled code cannot incur deadlocks~\cite{CM13}.

By contrast, little research has been done on general methods for proving functional correctness properties about choreographies.
Yet choreographies codify distributed protocols, and reasoning about the effect that these protocols have on the states of participants is usually important.

\paragraph*{This work.}
In this work, we present a Hoare logic for reasoning about choreographies.
Hoare logic~\cite{AO19,H69} is a common way of reasoning about programs.
A Hoare assertion is a triple, $\{\varphi\}P\{\psi\}$, where $\varphi$ and $\psi$ are formulas (respectively called the \emph{precondition} and \emph{postcondition}) and $P$ is a program.
This triple states that if $P$ is executed from a state that satisfies $\varphi$ and terminates, then the final state satisfies $\psi$.
We develop a Hoare logic where programs are choreographies and formulas can talk about the states of multiple processes jointly.

Our framework is based on well-studied theories of choreographic programming~\cite{CM20,M23}, in particular on properties that have been formalised in Coq~\cite{CMP21b,CMP21}.
This helps with the generality and elegance of our development.
For example, we leverage the property of confluence in metatheoretical proofs, and we rely on the compiler correctness results proven previously to transfer properties proven with our logic to distributed implementations compiled from choreographies.

\paragraph*{Contribution.}
We define a Hoare logic for reasoning about choreographic programs expressed in standard ways, thanks to a modular design parametrised on the language of state formulas.
We prove that our logic has the expected properties of a Hoare logic (soundness and partial completeness), and illustrate how it can be used to prove important properties of specific protocols encoded as choreographies.

\paragraph*{Structure.}
We review the choreographic language from \cite{CM20} in \cref{sec:lang}.
In \cref{sec:hoare} we describe our logic and prove its soundness.
\Cref{sec:compl} introduces weakest liberal preconditions, and uses them to show completeness and decidability results.
\cref{sec:related} discusses additional related work.
Illustrative examples are included throughout the text.

\section{Language}\label{sec:lang}
In this section we recall the choreographic language from~\cite{CM20}, which we will be reasoning about.
This language models systems of independent processes (networks), which interact by means of synchronous communication.
Each process is uniquely identified by a name, which is known by all other processes in the network, and can store values locally in memory referenced by variables.
The set of variable names is assumed to be the same for all processes.
The set of all processes is denoted by \allpids.

There are two kinds of messages that can be exchanged: \emph{values} are results of evaluating \emph{expressions} locally; and \emph{selection labels} are special constants used to implement agreement on choices about alternative distributed behaviour.

The actual sets of expressions and labels are left unspecified, but we make some assumptions.
\emph{Labels} are taken from a (small) finite set.
\emph{Expressions} are freely generated from a (typed) signature $\Xi$ and the set of process variables.
Expressions that evaluate to a Boolean value are also called Boolean expressions.

\subsection{Syntax}
Formally, the syntax of choreographies is defined by the grammar
\begin{align*}
C &::= I;C \mid \gencond \mid X \mid \gencont \mid \nil\\
I &::= \gencassign \mid \gencom \mid \gensel
\end{align*}
where $C$ is a choreography, $I$ is an instruction, $\pid p$ and $\pid q$ are processes names, $e$ is an expression, $v$ is a value, $x$ is a variable, $b$ is a Boolean expression, $\albl$ is a selection label, and $X$ is a procedure name.

Choreographies can be built as: an instruction $I$ followed by a choreography; alternative composition of two choreographies $C_1$ and $C_2$; procedure calls; or the terminated choreography $\nil$.
There are two terms for procedure calls, corresponding to: (a)~a procedure that has yet to be entered by any processes ($X$) or (b) one which has already started, annotated with the set of processes that still have to enter it ($\gencont$).

There are three types of instructions: local assignment (\gencassign), where $\pid p$ evaluates expression $e$ and stores the result in its local variable $x$; value communication, where $\pid p$ evaluates $e$ and sends the result to $\pid q$, who stores it in variable $x$; and label selection, where $\pid p$ sends a label $\albl$ to $\pid q$ (typically to communicate the result of a local choice -- see below).

In a conditional, $\gencond$, process $\pid p$ evaluates the expression $b$ to decide whether the choreography should continue as $C_1$ or $C_2$.
Since only $\pid p$ knows the result of the evaluation, the remaining processes need to be informed of how they should behave -- this knowledge is typically propagated to other participants by means of label selections.\footnote{For this reason, the set of labels is often fixed to be a two-element set, one for each branch of a choice.}

Repetitive and iterative behaviour in this language is achieved by means of procedure calls.
Calling a procedure $X$ simply invokes the choreography corresponding to $X$, given in a separate mapping of procedure definitions \cdefs.
Since choreography execution is distributed, processes do not need to synchronise when entering a procedure.
This requires a runtime term, \gencont, to denote a procedure call that only some processes have entered.
This term keeps track of both the set of processes $\pids q$ that still need to enter $X$ and the execution state of the choreography, $C$.
As we show below, the semantics of choreographies allows for out-of-order execution, and consequently some processes may start executing their part of the procedure before others have entered it.

\begin{example}[Diffie-Hellman]
\label{ex:dh-1}
Consider the Diffie-Hellman key exchange protocol~\cite{DH76} which allows two parties, $\pid p$ and $\pid q$, to establish a shared secret, $s$, that they can later use for symmetric encryption.
To implement this protocol in our choreographic language we need only communication, local computation, and a language of expressions with modular exponentiation ($b^e \mod m$)~\cite{GMP20,M23}.
The protocol assumes that participants have a private key each ($a$, $b$) and that they share a prime number $m$ and a primitive root modulo $m$, $g$.
\begin{align*}
DH = {} & \com{p}{(g^{a}\mod m)}{q}{a}; & \quad &\text{$\pid p$ computes its public key and sends it to $\pid q$}\\
& \com{q}{(g^b\mod m)}{p}{b}; && \text{$\pid q$ computes its public key and sends it to $\pid p$} \\
& \assign{p}{s}{b^a\mod m}; && \text{$\pid p$ generates the shared secret} \\
& \assign{q}{s}{a^b\mod m}; && \text{$\pid q$ generates the shared secret} \\
& \nil &&& \eoe
\end{align*}
\end{example}

\begin{example}[Zeros]
\label{ex:zeros-1}
Searching for a zero of a function is a common textbook example for program verification using Hoare-style logics \cite{ABO09}. 
In this example, we consider a version of the problem where $\pid p$ and $\pid q$ coordinate to find a zero of a function $f$ over natural numbers: $\pid p$ is responsible for selecting the values to test and $\pid q$ for evaluating $f$ and choosing whether to stop or continue searching. 
We capture this iterative protocol with the following recursive procedure.
\begin{align*}
\cdefs(Z) = {} & \com{p}{x}{q}{x};\\ 
               & \cond{q}{f(x)=0~}{~(\sel{q}{p}{\lbl{L}}; \nil)\\
               & \hspace{14ex}}{~(\sel{q}{p}{\lbl{R}}; \assign{p}{x}{1 + x}; Z)}
\end{align*}
Then, to search the domain of $f$, we run the choreography $\assign{p}{x}{0}; Z$.
\eoe
\end{example}

We define a function $\pn$ that returns the set of processes involved in an instruction or choreography.
This function is defined inductively in the natural way.
\begin{align*}
  & \pn(\gencassign) = \{\pid p\}
  && \pn(\gencom) = \pn(\gensel) = \{\pid p,\pid q\} \\
  & \pn(I;C) = \pn(I) \cup \pn(C)
  && \pn(\gencond) = \{\pid p\} \cup \pn(C_1) \cup \pn(C_2) \\
  & \pn(X) = \allpids
  && \pn(\gencont) = \pids q \cup \pn(C)
\end{align*}
For simplicity we assume that all processes are involved in all procedures; an alternative is to annotate procedure names with the set of processes they use, see~\cite{CMP21}. 
This does not affect the behaviour of any processes actually involved in the procedure, and semantically only means that a process which would otherwise be considered terminated may first have to enter some number of empty procedure calls.

\subsection{Semantics}

The semantics of choreographies uses a notion of \emph{state}, which maps each variable at each process to the value it currently stores.
It is convenient to define a \emph{local} state as a mapping from variables to values (representing the memory state at one process), and a \emph{global} state as a function $\Sigma$ such that $\Sigma(\pid p)$ is the local state at $\pid p$.

To evaluate expressions, we assume that there is an evaluation function that takes a local state as parameter, evaluates variables to their value according to the state, and proceeds homeomorphically.
In other words, evaluation maps each symbol in $\Xi$ to a function from values to values.
We assume that all choreographies and functions are well-typed, in the sense that the values stored in each variable match the types expected in the expressions in which they occur.
Furthermore, we assume that evaluation always terminates, and write $\eval e\Sigma pv$ to denote that $e$ evaluates to $v$ according to state $\Sigma(\pid p)$ (local at $\pid p$).

The formal semantics of choreographies is defined by means of a labelled transition system capturing the intuitions given above, whose rules are given in \cref{fig:sem}.
Transitions are labelled by transition labels, which abstract from the possible choreography actions that can be observed: communications of values ($\gencoml$) and labels ($\gensel$), or internal actions ($\gencondl$).
The function $\pn$ is naturally extended to these.
\begin{align*}
  \pn(\gencondl) &= \{\pid p\}
  & \pn(\gencoml) = \pn(\gensel) &= \{\pid p,\pid q\}
\end{align*}

\begin{figure}[t]
\def\rulesep{\mspace{16mu}}
\def\scalefactor{.95}
\begin{gather*}
  \scalebox{\scalefactor}{
  \infer[\rname C{Assign}]{
    \tuple{\gencassign;C,\Sigma}
    \lto{\tau@\pid p}
    \tuple{C,\Sigma[\tuple{\pid p,x}\mapsto v]}
  }{
    \eval e\Sigma pv
  }}
  \rulebreak
  \scalebox{\scalefactor}{
  \infer[\rname C{Com}]{
    \tuple{\gencom;C,\Sigma}
    \lto{\gencoml}
    \tuple{C,\Sigma[\tuple{\pid q,x}\mapsto v]}
  }{
    \eval e\Sigma pv
  }}
  \rulebreak
  \scalebox{\scalefactor}{
  \infer[\rname C{Sel}]{
    \tuple{\gensel;C,\Sigma}
    \lto{\pid p\tto \pid q[\albl]}
    \tuple{C,\Sigma}
  }{}}
  \rulesep
  \scalebox{\scalefactor}{
  \infer[\rname C{Then}]{
    \tuple{\gencond,\Sigma}
    \lto{\tau@\pid p}
    \tuple{C_1,\Sigma}
  }{
    \eval b\Sigma p\true
  }}
  \rulebreak
  \scalebox{\scalefactor}{
  \infer[\rname C{Else}]{
    \tuple{\gencond,\Sigma}
    \lto{\tau@\pid p}
    \tuple{C_2,\Sigma}
  }{
    \eval b\Sigma p\false
  }}
  \rulebreak
  \scalebox{\scalefactor}{
  \infer[\rname C{Call}]{
    \tuple{X,\Sigma}
    \lto{\tau@\pid r}
    \tuple{\cont X{\pn(C)\setminus\pid r}C,\Sigma}
  }{
    \cdefs(X)=C
  }}
  \rulebreak
  \scalebox{\scalefactor}{
  \infer[\rname C{Enter}]{
    \tuple{\gencont,\Sigma}
    \lto{\tau@\pid r}
    \tuple{\cont X{\pids q\setminus\pid r}C,\Sigma}
  }{
    \pid r \in \pids q & \pids q\setminus\pid r\neq\emptyset
  }}
  \rulesep
  \scalebox{\scalefactor}{
  \infer[\rname C{Finish}]{
    \tuple{\cont X{\pid q}C,\Sigma}
    \lto{\tau@\pid q}
    \tuple{C,\Sigma}
  }{}}
  \rulebreak
  \scalebox{\scalefactor}{
  \infer[\rname C{DelayI}]{
    \tuple{I;C,\Sigma} \lto\mu \tuple{I;C',\Sigma'}
  }{
    \tuple{C,\Sigma} \lto\mu \tuple{C',\Sigma'}
    & \pn(I)\disjoint \pn(\mu)
  }}
  \rulebreak
  \scalebox{\scalefactor}{
  \infer[\rname C{DelayC}]{
    \tuple{\gencond,\Sigma} \lto\mu \tuple{\cond pb{C_1'}{C_2'},\Sigma'}
  }{
    \tuple{C_1,\Sigma} \lto\mu \tuple{C'_1,\Sigma'}
    & \tuple{C_2,\Sigma} \lto\mu \tuple{C'_2,\Sigma'}
    & \pid p \notin \pn(\mu)
  }}
  \rulebreak
  \scalebox{\scalefactor}{
  \infer[\rname C{DelayP}]{
    \tuple{\gencont,\Sigma}
    \lto\mu
    \tuple{\cont X{\pids q}{C'},\Sigma'}
  }{
    \tuple{C,\Sigma}
    \lto\mu
    \tuple{C',\Sigma'}
    & \pids q \disjoint \pn(\mu)
  }}
\end{gather*}
\caption{Semantics}\label{fig:sem}
\end{figure}

Rules \rname C{Assign}, \rname C{Com}, \rname C{Sel}, \rname C{Then} and \rname C{Else} capture the intuition behind the different choreographic primitives given earlier.
The next three rules deal with procedure invocation: the procedure starts when one process decides to enter it, and all remaining processes are put on a ``waiting list'' (rule~\rname C{Call}); whenever a new process enters it, it is removed from the set of waiting processes (rule~\rname C{Enter}); and when the last process enters the call the set is removed (rule~\rname C{Finish}).

The last three rules deal with out-of-order execution: processes can always execute what for them is the next action, regardless of what other processes are doing.
This is modelled by rules \rname C{DelayI}, \rname C{DelayC} and \rname C{DelayP}, which allow execution of an action that is not syntactically the first instruction, conditional or procedure entering, respectively.
The side conditions in these rules state that the processes involved in the action being executed do not participate in the actions being skipped (we write $X\disjoint Y$ for $X\cap Y=\emptyset$).
Additionally, the action being performed in \rname C{DelayC} must be an action that can be made regardless of what $\pid p$ chooses.

The reflexive and transitive closure of transition is denoted by $\mlto$; we omit the sequente of transition labels, as this is immaterial for the current presentation.

For our proofs we also need the concept of \emph{head transition}, which is the transition relation defined by the first 8 rules in \cref{fig:sem} -- that is, disallowing out-of-order execution.
We write $\tuple{C,\Sigma}\hto\mu\tuple{C',\Sigma'}$ to denote that $C$ makes a head transition to $C'$, and $\mhto$ for the reflexive and transitive closure of this relation.

\section{A Hoare calculus for choreographies}\label{sec:hoare}

In this section we introduce our formal calculus for proving semantic properties of choreographies based on Hoare logic.
Our judgements are triples \hoare\varphi C\psi, interpreted as ``if choreography $C$ is executed from a state satisfying formula $\varphi$ and execution terminates, then the final state satisfies formula $\psi$''.

In this section we formally define the syntax and the semantics of this calculus, starting with the state logic -- the language in which formulas $\varphi$ and $\psi$ are written.

\subsection{State logic}

State logics in Hoare calculi typically express properties as ``variable $x$ stores a value $v$'', which are easily expressible in equational logic.
We follow this tradition, and define our state logic to be an extension of equational logic.
In order to deal with assignments, we need to be able to update formulas in a way that corresponds to the state update in rule~\rname C{Assign} -- but without computing values.
This can be achieved by substituting the expression communicated in the original formula -- but this means that expressions may suddenly refer to variables stored in different processes, so that they are no longer evaluated locally.

To deal with these issues, our state logic is parameterised on a set of expressions that is freely generated from the same signature $\Xi$, but using localised variables $\pid p.x$.
We denote these expressions as $\mathcal E$, and extend evaluation to them in the natural way.

State formulas are defined as
\[\varphi,\psi ::= (\mathcal E = \logvar X) \mid \delta \mid \varphi\wedge\varphi \mid \neg\varphi \]
where $\logvar X$ is a (logical) variable and $\delta\in\decth$, where \decth\ is a decidable theory whose terms include the logical variables.
Parameterising the language on \decth\ keeps the syntax of formulas simpler, while giving the user flexibility to define additional needed formulas.
  This is similar to our treatment of the local language.
  For example, if \decth\ includes $\logvar X>\logvar X'$, then the state logic is able to express constraints such as $\pid p.x>\pid q.y$, assuming values are integers: this can be written as $\pid p.x=\logvar X\wedge \pid q.y=\logvar Y\wedge\logvar X>\logvar Y$.
Disjunction and implication are defined as abbreviations in the usual way.

Given a state $\Sigma$, a formula $\varphi$ and an assignment $\rho$ from logical variables to values, we define $\Sigma\Vdash_\rho\varphi$, read ``$\Sigma$ satisfies $\varphi$ under $\rho$'', by the rules
\[
\infer{\Sigma\Vdash_\rho\mathcal E = \logvar X}{\mathcal E \downarrow_\Sigma \rho(\logvar X)}
\qquad
\infer{\Sigma\Vdash_\rho\delta}{\delta\in\decth & \varphi \mbox{ is true}}
\]
together with the usual rules for logical connectives.

As usual in Hoare logics, assignment is dealt with using substitution -- for example, we expect to be able to prove something like
\[
\infer{\hoare{\varphi'}{\gencassign;\nil}\varphi}{}
\]
where $\varphi'$ is obtained by $\varphi$ by substituting $\pid p.x$ with $e$.
However, simply replacing every occurrence of $\pid p.x$ with $e$ yields in general an invalid formula (due to the different variables in choreographies and state formulas).
We define the \emph{localisation of $e$ at $\pid p$}, $\loc pe$, as the (logical) expression obtained from $e$ by replacing every (choreography) variable $x$ with $\pid p.x$; and the \emph{localised substitution} $\subst{\mathcal E}qxpe$ as the expression obtained from $\mathcal E$ by replacing every occurrence of $\pid q.x$ with $\loc pe$.
(The rule for communication uses different values for $\pid p$ and $\pid q$.)
Observe that these operations can both be defined by structural recursion on expressions.
Localised substitution extends to formulas in the natural way.

\begin{example}
  Take $\varphi$ to be the formula $\pid p.x>3$ and $e$ to be the expression $y-z$.
  Replacing $\pid p.x$ with $y-z$ in $\varphi$ would yield the ill-formed formula $\pid p.(y-z)>3$.
  Instead, replacing $\pid p.x$ with $\loc p{y-z}=\pid p.y-\pid p.z$ yields the right formula $\pid p.y-\pid p.z>3$, and the above judgement becomes
  \[\infer{\hoare{\pid p.y-\pid p.z>3}{\pid p.x:=y-z;\nil}{\pid p.x>3}}{}\]
  which is syntactically well-formed.\eoe
\end{example}

We now show that an expression that has been localised to $\pid p$ is interpreted as its original evaluation in $\pid p$.
\begin{lemma}\label{lem:norm}
  Let $\Sigma$ be a state, $v$ be a value, $\logvar X$ be a logical variable and $\rho$ be an assignment such that $\rho(\logvar X)=v$.
  For any process $\pid p$ and expression $e$, $\eval e\Sigma pv$ iff $\Sigma\Vdash_\rho\loc pe=\logvar X$.
\end{lemma}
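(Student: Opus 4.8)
The plan is to peel off the satisfaction relation and then reduce the statement to an equality of evaluations that follows by a routine structural induction on $e$.

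First I would unfold the satisfaction rule for equality formulas. By the rule defining $\Sigma\Vdash_\rho(\mathcal E = \logvar X)$, the right-hand side $\Sigma\Vdash_\rho\loc pe = \logvar X$ holds precisely when $\loc pe \downarrow_\Sigma \rho(\logvar X)$; since $\rho(\logvar X)=v$ by hypothesis, this is exactly $\loc pe \downarrow_\Sigma v$. The lemma therefore reduces to the claim that local evaluation of $e$ at $\pid p$ and global evaluation of $\loc pe$ against $\Sigma$ return the same value, i.e.\ $\eval e\Sigma pv$ iff $\loc pe \downarrow_\Sigma v$.

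Then I would establish this equivalence by structural induction on $e$, exploiting that the two evaluation functions interpret every symbol of $\Xi$ by the same function and differ only in how they resolve variables. In the base case $e = x$, localisation gives $\loc px = \pid p.x$; local evaluation of $x$ reads the value stored for $x$ in the local state $\Sigma(\pid p)$, while global evaluation of the localised variable $\pid p.x$ resolves, by definition of the extended evaluation, to the same entry $\Sigma(\pid p)(x)$, so the two agree. In the inductive case $e = f(e_1,\dots,e_n)$ with $f\in\Xi$, localisation commutes with the symbol, $\loc p{f(e_1,\dots,e_n)} = f(\loc p{e_1},\dots,\loc p{e_n})$, and both sides apply the same interpretation of $f$ to the values of the subexpressions, which coincide by the induction hypotheses; hence the results coincide. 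Since evaluation is deterministic and total, equality of the two evaluated values yields the biconditional for every $v$.

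The only point that needs care is the variable base case: making precise that resolving $\pid p.x$ against the global state $\Sigma$ yields exactly the value that $x$ holds in the local state $\Sigma(\pid p)$. This rests entirely on the clause defining evaluation of localised expressions on localised variables, which the paper stipulates to extend ordinary evaluation in the natural way; once that clause is made explicit the equality is immediate, and the inductive step is then purely mechanical because both localisation and evaluation recurse through the term in lockstep.
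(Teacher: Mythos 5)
Your proposal is correct and follows essentially the same route as the paper, whose proof is simply ``induction on the structure of $e$''; you have just made explicit the unfolding of the satisfaction rule and the two induction cases (variables and symbols of $\Xi$) that the paper leaves implicit. Nothing is missing.
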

\begin{proof}
  Follows from induction on the structure of $e$.
  \qed
\end{proof}

We then show that doing a localised substitution in a formula is equivalent to changing the value of that variable in the environment.
\begin{corollary}
  \label{cor:norm}
  Let $\Sigma$ be a state, $\pid p$ be a process, $e$ be an expression and $v$ be a value such that $\eval e\Sigma pv$.
  For any formula $\varphi$ and assignment $\rho$, $\Sigma[\tuple{\pid p,x}\mapsto v]\Vdash_\rho\varphi$ iff $\Sigma\Vdash_\rho\subst\varphi qxpe$.
\end{corollary}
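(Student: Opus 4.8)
The plan is to prove the equivalence by structural induction on $\varphi$, pushing the actual content down to the atomic case, where it reduces to a statement about expression evaluation that is exactly \cref{lem:norm}. Throughout I write $\Sigma'$ for the updated state $\Sigma[\tuple{\pid q,x}\mapsto v]$, taking the updated variable to be the one eliminated by the substitution --- the receiver $\pid q$ in rule~\rname C{Com}, which coincides with the evaluating process $\pid p$ in the assignment case. The induction hypothesis is that $\Sigma' \Vdash_\rho \psi$ iff $\Sigma \Vdash_\rho \subst\psi qxpe$ for every proper subformula $\psi$ of $\varphi$.

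The cases for the logical connectives are immediate. Localised substitution is defined by structural recursion and so commutes with the connectives, $\subst{(\varphi_1 \wedge \varphi_2)}qxpe = \subst{\varphi_1}qxpe \wedge \subst{\varphi_2}qxpe$ and $\subst{(\neg\varphi_1)}qxpe = \neg\,\subst{\varphi_1}qxpe$, while $\Vdash_\rho$ interprets $\wedge$ and $\neg$ in the standard way; both cases therefore follow by applying the induction hypothesis to the immediate subformulas. The case $\varphi = \delta$ with $\delta \in \decth$ is also immediate: $\delta$ contains only logical variables and no localised variable $\pid q.x$, so $\subst\delta qxpe = \delta$, and satisfaction of $\delta$ depends only on $\rho$ and not on the state, whence $\Sigma' \Vdash_\rho \delta$ iff $\Sigma \Vdash_\rho \delta$.

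The heart of the argument is the atomic case $\varphi = (\mathcal E = \logvar X)$. Unfolding the satisfaction rule on both sides turns the goal into $\mathcal E \downarrow_{\Sigma'} \rho(\logvar X)$ iff $\subst{\mathcal E}qxpe \downarrow_\Sigma \rho(\logvar X)$, so it suffices to show that $\mathcal E$ evaluated under $\Sigma'$ and $\subst{\mathcal E}qxpe$ evaluated under $\Sigma$ yield the same value. I would establish this by a secondary induction on the structure of $\mathcal E$. If $\mathcal E$ is a localised variable $\pid r.y$ distinct from $\pid q.x$, the substitution leaves it unchanged and the update from $\Sigma$ to $\Sigma'$ does not touch it, so both evaluations agree; if $\mathcal E$ is $\pid q.x$, then evaluating under $\Sigma'$ gives $v$ by definition of the update, whereas the substitution replaces it with $\loc pe$, and \cref{lem:norm} together with the hypothesis $\eval e\Sigma pv$ gives $\loc pe \downarrow_\Sigma v$. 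Finally, if $\mathcal E$ is a function symbol of $\Xi$ applied to subexpressions, the claim propagates from the secondary induction hypotheses, since evaluation proceeds homomorphically and the substitution descends uniformly into the arguments.

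The main obstacle is precisely this atomic case, and within it the need to reconcile the two evaluation relations: \cref{lem:norm} speaks of a choreography expression $e$ and its localisation $\loc pe$, whereas the substitution acts on logical expressions built from localised variables. The single variable case $\mathcal E = \pid q.x$ is where these meet, and once it is settled --- with $\loc pe$ under $\Sigma$ and $\pid q.x$ under $\Sigma'$ both evaluating to $v$ --- the remainder of both inductions is routine propagation through function symbols and connectives.
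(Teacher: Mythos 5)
Your proof is correct and follows essentially the same route as the paper's: structural induction on $\varphi$, with the $\decth$ case trivial, the connective cases immediate from the induction hypothesis, and the atomic case $(\mathcal E = \logvar X)$ resting on \cref{lem:norm}. The only difference is one of detail: where the paper compresses the atomic case to ``simply \cref{lem:norm}'', you spell out the secondary induction on $\mathcal E$ needed to push the substitution through compound logical expressions, and you sensibly resolve the statement's $\pid p$/$\pid q$ mismatch by taking the updated variable to be the one eliminated by the substitution, which is exactly how the corollary is applied in the paper's soundness proof.
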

\begin{proof}
  By structural induction on $\varphi$.
  One of the base cases is simply \Cref{lem:norm}, while the other is trivially empty (since formulas in \decth\ are not affected by substitution).
  The two inductive cases follow directly by induction hypothesis.
  \qed
\end{proof}
 
\subsection{Hoare logic}

We are now ready to introduce the rules for our calculus, which are depicted in~\cref{fig:inf}.
To deal with procedure definitions, we need additional information about their effect on states.
This is achieved by the \emph{procedure specification map} \cprops, which maps each procedure name to a pair \tuple{\varphi,\psi} with intended meaning that the judgement \hoare{\varphi}{C}{\psi} should hold, where $C$ is the definition of $X$.
\begin{figure}[t]
  \begin{gather*}
    \infer[\rname H{Nil}]{
      \vdash_\cprops \hoare\varphi\nil\varphi
    }{}
    \qquad
    \infer[\rname H{Assign}]{
      \vdash_\cprops \hoare{\subst\varphi pxpe}{\gencassign;C}{\varphi'}
    }{
      \vdash_\cprops \hoare\varphi C{\varphi'}
    }
    \rulebreak
    \infer[\rname H{Com}]{
      \vdash_\cprops \hoare{\subst\varphi qxpe}{\gencom;C}{\varphi'}
    }{
      \vdash_\cprops \hoare\varphi C{\varphi'}
    }
    \qquad
    \infer[\rname H{Sel}]{
      \vdash_\cprops \hoare\varphi{\gensel;C}{\varphi'}
    }{
      \vdash_\cprops \hoare\varphi C{\varphi'}
    }
    \rulebreak
    \infer[\rname H{Cond}]{
      \vdash_\cprops \hoare\varphi\gencond\psi
    }{%
      \vdash_\cprops \hoare{\varphi\wedge\loc pb\logeq X\true}{C_1}\psi
      &
      \vdash_\cprops \hoare{\varphi\wedge\loc pb\logeq X\false}{C_2}\psi
      & \logvar X\mbox{ fresh}
    }
    \rulebreak
    \infer[\rname H{Call}]{
      \vdash_\cprops \hoare\varphi X\psi
    }{
      \cprops(X) = \tuple{\varphi,\psi}
    }
    \qquad
    \infer[\rname H{Call'}]{
      \vdash_\cprops \hoare\varphi\gencont\psi
    }{
      \vdash_\cprops \hoare\varphi C\psi
    }
    \rulebreak
    \infer[\rname H{Weak}]{
      \vdash_\cprops \hoare\varphi C\psi
    }{
      \decth\models \varphi\to\varphi'
      & \vdash_\cprops \hoare{\varphi'}C{\psi'}
      & \decth\models \psi'\to\psi
    }
  \end{gather*}
  \caption{Inference rules}
  \label{fig:inf}
\end{figure}

The rule for assignment \rname H{Assign} has already been motivated earlier, and is similar to the rule in standard Hoare calculi for imperative programs; likewise, rules \rname H{Nil} and \rname H{Cond} are also standard.
The notation $\loc pb\logeq X\true$ in rule \rname H{Cond} abbreviates the conjunction $\loc pb=\logvar X\wedge\logvar X=\true$.

Rule \rname H{Weak} is a weakening rule, which allows us to include reasoning in the state logic.
The notation $\decth\models\varphi$ stands for ``$\varphi$ is a valid formula''.

Rules \rname H{Com} and \rname H{Sel} adapt the intuitions behind those rules to our choreography actions --- a communication is essentially an assignment of a variable located at a different process, while selection %
does not affect the state.

Rule \rname H{Call} deals with unexpanded procedure calls by reading the corresponding judgement from the specification map, while \rname H{Call'} reflects the fact that the current state of the expanded procedure is explicitly given and a process entering a procedure does not affect the state. 

These rules only make sense if the specification map is consistent with the procedure definitions in the following sense.

\begin{definition}
  A procedure specification map \cprops\ is \emph{consistent} with a set of procedure definitions \cdefs\ if $\vdash_\cprops \hoare{\m{fst}(\cprops(X))}{\cdefs(X)}{\m{snd}(\cprops(X))}$ for every $X$, where $\m{fst}$ and $\m{snd}$ are the standard projection operators for pairs.
\end{definition}

This notion plays a similar role to the more usual concept of ``being a loop invariant'' in Hoare logics for languages with while-loops, stating that $\m{fst}(\cprops(X))$ always holds whenever $X$ is called.

\begin{example}[Diffie-Hellman, functional correctness]
\label{ex:dh-2}
Consider \cref{ex:dh-1}, and assume $\decth$ is a theory for deciding equality of arithmetic expressions with modular exponentiation.
Functional correctness for the Diffie-Hellman protocol, states if $\pid p$ and $\pid q$ have the same modulus $m$ and base $g$ then they will share the same secret $s$ once the protocol terminates.
These pre- and postconditions are captured by the following state formulas
$\varphi = (\pid p.g \logeq G \pid q.g \land \pid p.m \logeq M \pid q.m)$
and
$\psi = \pid p.s \logeq S \pid q.s$.
Thus, we can show the correctness of $DH$ by deriving $\vdash \hoare{\varphi}{DH}{\psi}$:
\[
\infer[\rname H{Weak}]
  {\vdash \hoare 
      {\varphi}
      {DH}
      {\psi}}
  {\decth\models \varphi \to\varphi_1
  &\infer[\rname H{Com}]
      {\vdash \hoare
        {\varphi_1}
        {\com{p}{(g^a\mod m)}{q}{a};\dots}
        {\psi}}
      {\infer[\rname H{Com}]
        {\vdash \hoare
            {\varphi_2}
            {\com{q}{(g^b\mod m)}{p}{b}; \dots}
            {\psi}}
        {\infer[\rname H{Assign}]
          {\vdash \hoare
              {\varphi_3}
              {\assign{p}{s}{b^a\mod m}; \dots}
              {\psi}}
            {\infer[\rname H{Assign}]
              {\vdash \hoare
                {\varphi_4}
                {\assign{q}{s}{a^b\mod m}; \nil}
                {\psi}}
              {\infer[\rname H{Nil}]
                {\vdash \hoare
                  {\psi}
                  {\nil}
                  {\psi}}
                {}}}}}}
\]
where:
\begin{align*}
\varphi_1 = {} & \subst{\varphi_2} qap{g^a\mod m} \\
          = {} & {({\pid q.g}^{\pid q.b}\mod {\pid q.m})}^{\pid p.a}\mod {\pid p.m} \logeq S ({\pid p.g}^{\pid p.a}\mod {\pid p.m})^{\pid q.b}\mod {\pid q.m} \\
\varphi_2 = {} & \subst{\varphi_3} pbq{g^b\mod m}
          = {({\pid q.b}^{\pid q.b}\mod {\pid q.m})}^{\pid p.a}\mod {\pid p.m} \logeq S {\pid q.a}^{\pid q.b}\mod {\pid q.m} \\
\varphi_3 = {} & \subst{\varphi_4} psp{b^a\mod m}
          = {\pid p.b}^{\pid p.a}\mod {\pid p.m} \logeq S {\pid q.a}^{\pid q.b}\mod {\pid q.m} \\
\varphi_4 = {} & \subst\psi qsq{a^b\mod m} = \pid p.s \logeq S {\pid q.a}^{\pid q.b}\mod {\pid q.m} &\eoe
\end{align*}
\end{example}

We can now show that this calculus is sound, in the sense that it only derives valid judgements.
Given confluence of the transition system for the semantics of choreographies~\cite{CMP21}, it suffices to show that this holds for head transitions: if execution terminates, any path of execution must lead to the same final state.

\begin{lemma}
  \label{lem:sound}
  Assume that \cprops\ is consistent with \cdefs\ and that $\vdash_\cprops \hoare\varphi C\psi$.
  For every state $\Sigma$ and assignment $\rho$, if $\Sigma\Vdash_\rho\varphi$ and $\tuple{C,\Sigma}\mhto\tuple{\nil,\Sigma'}$, then $\Sigma'\Vdash_\rho\psi$.
\end{lemma}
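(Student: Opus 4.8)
The plan is to prove the claim by induction on the number $n$ of steps in the head-transition sequence $\tuple{C,\Sigma}\mhto\tuple{\nil,\Sigma'}$ (the reduction to head transitions has already been justified, so no out-of-order execution needs to be considered), and inside each step to invert the last rule of the derivation of $\vdash_\cprops\hoare\varphi C\psi$. Every rule except \rname H{Weak} is syntax-directed by $C$, so it both fixes the shape of $C$ and determines which head transition can fire first; \rname H{Weak} is the only rule that leaves $C$ (and hence $n$) untouched, so I would handle it by a secondary induction on the height of the derivation --- equivalently, the whole argument is a lexicographic induction on the pair $(n,\text{derivation height})$. For the base case $n=0$ we have $C=\nil$ and $\Sigma=\Sigma'$, and the only rules concluding $\vdash_\cprops\hoare\varphi\nil\psi$ are \rname H{Nil} (giving $\varphi=\psi$) and \rname H{Weak}. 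The \rname H{Weak} case is in fact uniform across all $n$: from the premise $\vdash_\cprops\hoare{\varphi'}C{\psi'}$ together with $\decth\models\varphi\to\varphi'$ and $\decth\models\psi'\to\psi$, validity of $\varphi\to\varphi'$ upgrades $\Sigma\Vdash_\rho\varphi$ to $\Sigma\Vdash_\rho\varphi'$, the secondary induction hypothesis yields $\Sigma'\Vdash_\rho\psi'$, and validity of $\psi'\to\psi$ delivers $\Sigma'\Vdash_\rho\psi$.

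For $n\ge1$ with a syntax-directed last rule, the shape of $C$ fixes the first head transition and I would pass the remaining $n-1$ steps to the induction hypothesis. In \rname H{Assign} and \rname H{Com} the step stores the value $v$ of $e$ (evaluated at $\pid p$) into a variable, and \Cref{cor:norm} is exactly the bridge needed: it rewrites the precondition $\subst\varphi qxpe$ of the rule as satisfaction of $\varphi$ in the updated state $\Sigma[\tuple{\pid q,x}\mapsto v]$ (with $\pid q=\pid p$ in the assignment case), after which the hypothesis on the body judgement $\vdash_\cprops\hoare\varphi C{\varphi'}$ discharges the goal. \rname H{Sel} changes no state, so the continuation hypothesis applies directly. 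In \rname H{Cond} the first step is \rname C{Then} or \rname C{Else} according to the value of $b$ at $\pid p$; setting $\rho'=\rho[\logvar X\mapsto\true]$ (respectively $\false$) and using freshness of $\logvar X$ --- so that satisfaction of $\varphi$ and $\psi$ is insensitive to the change from $\rho$ to $\rho'$ --- \Cref{lem:norm} establishes $\Sigma\Vdash_{\rho'}\loc pb\logeq X\true$, so the matching premise's precondition holds and its hypothesis gives $\Sigma'\Vdash_{\rho'}\psi$, equivalently $\Sigma'\Vdash_\rho\psi$.

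The delicate cases, and where I expect the main obstacle to lie, are the procedure rules, because the derivation for a procedure body is not a subderivation of the one at hand but is instead supplied by consistency of \cprops\ with \cdefs. This is precisely the reason for inducting on the length of the execution rather than on the derivation. In \rname H{Call} the transition \rname C{Call} leaves the state unchanged and rewrites $X$ into $\cont X{\pn(C)\setminus\pid r}C$ with $\cdefs(X)=C$; consistency provides $\vdash_\cprops\hoare\varphi C\psi$ (as $\cprops(X)=\tuple{\varphi,\psi}$), and a single application of \rname H{Call'} lifts it to $\vdash_\cprops\hoare\varphi{\cont X{\pn(C)\setminus\pid r}C}\psi$, so the induction hypothesis applies to the remaining $n-1$ steps. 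The pair \rname H{Call'}/\rname C{Enter},\rname C{Finish} is analogous and simpler: \rname H{Call'} already yields $\vdash_\cprops\hoare\varphi C\psi$, and since \rname H{Call'} does not constrain the waiting set, it re-derives a judgement for each successive runtime term while the state stays fixed. The one thing that must be gotten right is the well-foundedness of this scheme: with the induction ordered primarily by execution length and only secondarily by derivation height (for \rname H{Weak}), every case either strictly shortens the execution or strictly shrinks the derivation, so the apparent circularity introduced by recursive procedure calls does not arise.
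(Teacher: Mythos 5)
Your proposal is correct and takes essentially the same route as the paper's own proof: a primary induction on the number of head transitions combined with a secondary induction on the derivation to absorb \rname H{Weak} (the paper phrases this as induction on transition count with an inner induction on derivation size, which is the same lexicographic scheme), using \Cref{cor:norm} for \rname H{Assign}/\rname H{Com} and consistency of \cprops{} together with \rname H{Call'} for procedure calls. You additionally work out the cases the paper omits as routine (\rname H{Cond} with the fresh-variable bookkeeping via \Cref{lem:norm}, \rname H{Sel}, and the \rname C{Enter}/\rname C{Finish} steps), and your handling of them is sound.
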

\begin{proof}
  The proof is by induction on the number of transitions from \tuple{C,\Sigma} to \tuple{\nil,\Sigma'}.
  Within each case, we use induction on the size of the derivation of $\vdash_\cprops \hoare\varphi C\psi$.
  We include some representative cases.

  \begin{itemize}[nosep,noitemsep]
  \item If the number of transitions is $0$, then $C=\nil$ and $\Sigma=\Sigma'$.
    The derivation of $\vdash_\cprops \hoare\varphi\nil\psi$ must then end with an application of \rname H{Nil} -- which implies that $\psi=\varphi$, establishing the thesis -- or of \rname H{Weak} -- and the induction hypothesis together with soundness of \decth\ establishes the thesis.
  \item Assume that $\tuple{C,\Sigma}\lto\gencondl\tuple{C',\Sigma'}\mlto\tuple{C'',\Sigma''}$ and that the first transition is derived by rule \rname C{Assign}.
    Then $C$ has the form $\gencassign; C'$, $\eval e\Sigma pv$, and $\Sigma'=\Sigma[\tuple{\pid p,x}\mapsto v]$.
    There are two cases, depending on the last rule applied in the derivation of $\vdash_\cprops \hoare\varphi C\psi$.

    If the derivation terminates with an application of \rname H{Assign}, then $\varphi$ is $\subst{\varphi'}pxpe$ for some formula $\varphi'$ such that $\vdash_\cprops\hoare{\varphi'}{C'}\psi$.
    By \Cref{cor:norm} it follows that $\Sigma'\Vdash_\rho\varphi'$, and the induction hypothesis applied to $C'$ establishes the thesis.

    If the derivation terminates with an application of \rname H{Weak}, then the thesis is established by the induction hypothesis over the derivation, as in the base case.
  \item Assume that $\tuple{C,\Sigma}\lto\gencondl\tuple{C',\Sigma'}\mlto\tuple{C'',\Sigma''}$ and that the first transition is derived by rule \rname C{Call}.
    Then $C$ has the form $X$, $\cont X{\pn(C)\setminus\pid r}\cdefs(X)$ and $\Sigma'=\Sigma$.
    Again there are two cases, depending on the last rule applied in the derivation of $\vdash_\cprops \hoare\varphi C\psi$.

    If the derivation terminates with an application of \rname H{Call}, then by consistency of \cprops\ and \cdefs\ we know that $\vdash_\cprops\hoare\varphi{\cdefs(X)}\psi$, from which we can infer (using \rname H{Call'}) that also $\vdash_\cprops\{\varphi\}{\cont X{\pn(C)\setminus\pid r}\cdefs(X)}\{\psi\}$.
    The induction hypothesis applies to this choreography to establish the thesis.

    If the derivation terminates with an application of \rname H{Weak}, then the thesis is established as in the previous cases.
    \qed
  \end{itemize}
\end{proof}

\begin{theorem}[Soundness]
  \label{thm:sound}
  Assume that \cprops\ is consistent with \cdefs\ and $\vdash_\cprops \hoare\varphi C\psi$.
  For every state $\Sigma$ and assignment $\rho$, if $\Sigma\Vdash_\rho\varphi$ and $\tuple{C,\Sigma}\mlto\tuple{\nil,\Sigma'}$, then $\Sigma'\Vdash_\rho\psi$.
\end{theorem}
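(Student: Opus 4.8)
The plan is to obtain the theorem as a short consequence of \Cref{lem:sound}, which already establishes the claim for head transitions, by using confluence to replace the arbitrary terminating execution with a head execution reaching the same final state. Since every head transition is in particular a general transition, we have $\mhto\,\subseteq\,\mlto$, so the two relations share the same normal forms, and the two notions of "terminating execution" can be compared directly.

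First I would note that $\tuple{\nil,\Sigma'}$ is terminal: no rule in \cref{fig:sem} has a source configuration whose choreography is $\nil$, so no transition leaves it. Hence the hypothesis $\tuple{C,\Sigma}\mlto\tuple{\nil,\Sigma'}$ exhibits a terminating execution of $\tuple{C,\Sigma}$. Appealing to confluence of the semantics~\cite{CMP21}, a configuration has at most one normal form and every terminating execution reaches it; together with the property (also from~\cite{CMP21}) that head transitions suffice to reach termination whenever termination is reachable, this yields a head execution $\tuple{C,\Sigma}\mhto\tuple{\nil,\Sigma'}$ ending in the very same state $\Sigma'$.

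With this head execution in hand, I would apply \Cref{lem:sound} directly: from consistency of $\cprops$ with $\cdefs$, the derivation $\vdash_\cprops\hoare\varphi C\psi$, the premise $\Sigma\Vdash_\rho\varphi$, and the head execution $\tuple{C,\Sigma}\mhto\tuple{\nil,\Sigma'}$, we conclude $\Sigma'\Vdash_\rho\psi$, which is exactly the thesis.

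The main obstacle is pinning down precisely which consequence of confluence is needed. Abstract confluence alone gives uniqueness of the final state across terminating executions, but not the \emph{existence} of a head execution reaching $\nil$; the latter is a genuinely language-specific fact (intuitively, out-of-order execution only reorders steps and never introduces new terminating behaviours), so the argument leans essentially on the formalized metatheory of~\cite{CMP21} rather than on confluence as a purely abstract rewriting property.
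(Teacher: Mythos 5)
Your proposal is correct and follows essentially the same route as the paper: both reduce the theorem to \Cref{lem:sound} by converting the arbitrary terminating execution $\tuple{C,\Sigma}\mlto\tuple{\nil,\Sigma'}$ into a head execution $\tuple{C,\Sigma}\mhto\tuple{\nil,\Sigma'}$ with the same final state, invoking the metatheoretical results of~\cite{CMP21} (the paper phrases this as ``combining deadlock-freedom with confluence''). Your closing remark that abstract confluence alone does not yield the existence of a terminating head execution, and that this is a language-specific fact, is a fair and accurate gloss on what the citation is carrying.
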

\begin{proof}
  By the results in~\cite{CMP21}, if $\tuple{C,\Sigma}\mlto\tuple{\nil,\Sigma'}$ then also $\tuple{C,\Sigma}\mhto\tuple{\nil,\Sigma'}$ (combining deadlock-freedom with confluence).
  \Cref{lem:sound} then establishes the thesis.
  \qed
\end{proof}

\begin{example}[Zeros, functional correctness]
\label{ex:zeros-2}
Correctness for the program from \cref{ex:zeros-1} requires that if $f$ has a zero, the program terminates finding it or, equivalently, that the postcondition $\psi = ((f(\pid p.x) = 0) \logeq Z \true)$ holds.
Since there are no hypothesis on the initial state, we can use as a precondition $\phi$ any tautology (preferably one without occurrences of variables used in the program) e.g., $\varphi = (\true \logeq{T} \true)$.
The following derivation shows that the procedure specification map $\cprops(Z) = \langle \varphi, \psi\rangle$ is consistent with $\cdefs$ from \cref{ex:zeros-1}:
\[
\infer[\rname H{Weak}]
  {\vdash_\cprops \hoare
    {\varphi}
    {\cdefs(Z)}
    {\psi}}
  {\decth \models \varphi \to \varphi_1\mspace{-70mu}
  &\infer[\rname H{Com}]
    {\vdash_\cprops \hoare
      {\varphi_1}
      {\com{p}{x}{q}{x};\dots}
      {\psi}}
    {\infer[\rname H{Cond}]
      {\vdash_\cprops \hoare
        {\varphi_2}
        {\cond{q}{f(x)=0}{\dots}{\dots}}
        {\psi}}
      {\infer[\rname H{Sel}]
        {\vdash_\cprops \hoare
          {\psi}
          {\sel{q}{p}{\lbl{L}};\nil}
          {\psi}}
        {\infer[\rname H{Nil}]
          {\vdash_\cprops \hoare
            {\psi}
            {\nil}
            {\psi}}
          {}}
       &\infer[\rname H{Sel}]
         {\vdash_\cprops \hoare
           {\varphi}
           {\sel{q}{p}{\lbl{R}};\dots}
           {\psi}}
         {\infer[\rname H{Assign}]
           {\vdash_\cprops \hoare
             {\varphi}
             {\assign{p}{x}{x+1};Z}
             {\psi}}
           {\infer[\rname H{Call}]
             {\vdash_\cprops \hoare
               {\varphi}
               {Z}
               {\psi}}
             {\cprops(Z) = \langle \varphi, \psi\rangle}}}}}}
\]
where:
\begin{align*}
\varphi_1 = {} & ((f(\pid p.x) = 0) \logeq Z \true \to \psi) \land 
      ((f(\pid p.x) = 0) \logeq Z \false \to \varphi)\\
\varphi_2 = {} & ((f(\pid q.x) = 0) \logeq Z \true \to \psi) \land 
      ((f(\pid q.x) = 0) \logeq Z \false \to \varphi)
\end{align*}
The same pre- and postconditions hold for the whole program:
\[
\infer[\rname H{Assign}]
 {\vdash_\cprops \hoare
   {\varphi}
   {\assign{p}{x}{0};Z}
   {\psi}}
 {\infer[\rname H{Call}]
   {\vdash_\cprops \hoare
     {\varphi}
     {Z}
     {\psi}}
   {\cprops(Z) = \langle \varphi, \psi\rangle}}
\]
If follows from soundness, that any terminating execution ends in a state $\Sigma$ s.t., $\eval{f(x)=0}{\Sigma}{\pid p}{\true}$.
Termination follows by observing that $\pid p$ scans natural numbers starting from $0$ proceeding by single increments and thus, if $f$ has any zero, $\pid p$ will eventually send the first of them to $\pid q$ which in turn will choose to terminate the search.
\eoe
\end{example}

\section{Completeness of the Hoare calculus}
\label{sec:compl}

To establish a completeness result for our calculus, we follow standard techniques from the literature, by using a notion of \emph{weakest liberal precondition} -- the weakest assertion $\varphi$, given $\cprops$, $C$ and $\psi$, such that $\vdash_\cprops \hoare\varphi C\psi$.

\subsection{Weakest liberal preconditions}
In this section we define the weakest liberal precondition operator and show that it satisfies the expected properties.

\begin{definition}
  Let $C$ be a choreography, $\psi$ be a formula and \cprops\ be a procedure specification map.
  The \emph{weakest liberal precondition} for $C$ and $\psi$ under $\cprops$, $\wlp(C,\psi)$, is defined as follows.
  \begin{align*}
    \wlp((\gencassign;C),\psi) &= \subst{\wlp(C,\psi)}pxpe \\
    \wlp((\gencom;C),\psi) &= \subst{\wlp(C,\psi)}qxpe \\
    \wlp((\gensel;C),\psi) &= \wlp(C,\psi) \\
    \wlp(\gencond,\psi) &= (\loc pb\logeq X\true\to\wlp(C_1,\psi)) \\
    & \quad\wedge(\loc pb\logeq X\false\to\wlp(C_2,\psi)) \\
    \wlp(X,\psi) &= \m{fst}(\cprops(X)) \\
    \wlp(\gencont,\psi) &= \wlp(C,\psi) \\
    \wlp(\nil,\psi) &= \psi
  \end{align*}
\end{definition}

This operator is essentially mimicking the rules from \Cref{fig:inf}.
In the clause for conditionals, $\logvar X$ is fresh.
The only potentially surprising item is the definition of $\wlp(X,\psi)$, which ignores the actual formula $\psi$: this is again due to the fact that our results require an additional condition on \cprops\ (namely, that the conditions given are compatible with the definition of $\wlp$), which indirectly ensures that $\psi$ is also considered.

\begin{example}[Diffie-Hellman, WLP]
Consider the choreography $DH$ from \cref{ex:dh-1} and the postcondition $\psi = (\pid p.s \logeq S \pid q.s)$ from \cref{ex:dh-2}, $\wlp[](DH,\psi)$ is the formula $\varphi_1$  from \cref{ex:dh-2}.
\eoe
\end{example}

\begin{definition}
  A procedure specification map \cprops\ is \emph{adequate for $\psi$} given a set of procedure definitions \cdefs{} if, for any procedure name $X$, $\m{fst}(\cprops(X))$ is logically equivalent to $\wlp(\cdefs(X),\psi)$ and $\m{snd}(\cprops(X))=\psi$.
\end{definition}
In other words, for each $\psi$ we are interested in a mapping \cprops\ that, for each procedure, includes the right precondition that ensures that $\psi$ will hold if that procedure terminates.

\begin{example}[Zeros, WLP]
The procedure specification map $\cprops$ from \cref{ex:zeros-2} is adequate for the postcondition from the same example given the set of procedure definitions $\cdefs$ from \cref{ex:zeros-1}. 
In fact, $\wlp(\cdefs(Z),f(\pid p.x) = 0 \logeq Z \true)$ is the formula $\varphi_1$ from \cref{ex:zeros-2}, which is logically equivalent to $\m{fst}(\cprops(Z))$.
\eoe
\end{example}

The next results show that $\wlp(C,\psi)$ precisely characterises the set of states from which execution of $C$ guarantees $\psi$.
\begin{lemma}
  \label{lem:wlp1}
  Assume that \cprops\ is adequate for $\psi$ given \cdefs.
  Then, for every choreography $C$, $\vdash_\cprops \hoare{\wlp(C,\psi)}C\psi$.
\end{lemma}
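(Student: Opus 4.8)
The plan is to proceed by structural induction on $C$, exploiting the fact that $\wlp$ is defined clause by clause to mirror the inference rules of \cref{fig:inf}: in each case I would apply the rule matching the outermost constructor of $C$, discharging the premise about its continuation (or its branches) with the induction hypothesis. Since $\wlp(X,\psi)$ is computed without unfolding $\cdefs(X)$, the recursion — and hence the induction — is genuinely structural, with $X$ a base case and $C$ a strict subterm of $\gencont$.

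The base and ``transparent'' cases are immediate. For $C=\nil$ we have $\wlp(\nil,\psi)=\psi$, so \rname H{Nil} gives $\vdash_\cprops \hoare\psi\nil\psi$ directly. For $C=\gencassign;C'$, since $\wlp(C,\psi)=\subst{\wlp(C',\psi)}pxpe$, I apply \rname H{Assign} to the instance $\vdash_\cprops \hoare{\wlp(C',\psi)}{C'}\psi$ furnished by the induction hypothesis; the cases $\gencom;C'$ and $\gensel;C'$ are identical, using \rname H{Com} and \rname H{Sel} respectively (the latter relying on $\wlp(C,\psi)=\wlp(C',\psi)$), and $\gencont$ is handled the same way through \rname H{Call'}. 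For $C=X$ I use \rname H{Call}: by definition $\wlp(X,\psi)=\m{fst}(\cprops(X))$, and \rname H{Call} derives $\vdash_\cprops \hoare{\m{fst}(\cprops(X))}{X}{\m{snd}(\cprops(X))}$, while adequacy of $\cprops$ for $\psi$ supplies precisely $\m{snd}(\cprops(X))=\psi$. This is the only point where adequacy is invoked, and only through its postcondition clause.

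The one case requiring a small extra argument is the conditional $C=\gencond$, where $\wlp(C,\psi)$ is $(\loc pb\logeq X\true\to\wlp(C_1,\psi))\wedge(\loc pb\logeq X\false\to\wlp(C_2,\psi))$ for fresh $\logvar X$. To apply \rname H{Cond} with precondition $\varphi=\wlp(C,\psi)$ I must derive $\vdash_\cprops \hoare{\varphi\wedge\loc pb\logeq X\true}{C_1}\psi$ and its symmetric counterpart for $C_2$, whereas the induction hypothesis only yields $\vdash_\cprops \hoare{\wlp(C_1,\psi)}{C_1}\psi$. I therefore interpose \rname H{Weak}, observing that $\decth\models(\varphi\wedge\loc pb\logeq X\true)\to\wlp(C_1,\psi)$ holds by modus ponens on the first conjunct of $\varphi$ (a purely propositional consequence), leaving the postcondition unchanged. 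The main, though mild, obstacle is exactly this bookkeeping: picking the weakening in the correct direction and reconciling the freshness side-condition of $\wlp$ with that of \rname H{Cond} by choosing the same $\logvar X$ in both; elsewhere the cases are bare rule applications with no residual proof obligation.
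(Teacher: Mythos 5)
Your proposal is correct and follows essentially the same route as the paper's own proof: structural induction on $C$, with all cases discharged by a direct rule application plus the induction hypothesis, and the conditional case handled by interposing \rname H{Weak} to propositionally weaken $\wlp(\gencond,\psi)\wedge\loc pb\logeq X\true$ into $\wlp(C_1,\psi)$ before applying \rname H{Cond}, while the case $C=X$ uses only the postcondition clause of adequacy. Your explicit remark that $\wlp(X,\psi)$ does not unfold $\cdefs(X)$, so the induction is genuinely structural, is a nice clarification the paper leaves implicit.
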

\begin{proof}
  By structural induction on $C$.
  Most cases immediately follow from the definition of \wlp\ together with the induction hypothesis.
  We detail the only nontrivial ones.
  \begin{itemize}[nosep,noitemsep]
  \item If $C$ is $\gencond$, we observe that $\vdash_\cprops \hoare{\wlp(C_1,\psi)}{C_1}\psi$.
    Since $$(\wlp(\gencond,\psi)\wedge\loc pb\logeq X\true)\to\wlp(C_1,\psi)$$ is a valid propositional formula, we can apply rule \rname H{Weak} to derive $\vdash_\cprops \hoare{\wlp(\gencond,\psi)\wedge\loc pb\logeq X\true}{C_1}\psi$.
    A similar reasoning applied to $C_2$ derives the other hypothesis for rule \rname H{Cond}, and combining them establishes the thesis.
  \item If $C$ is $X$, then the thesis follows from the assumption that $\m{snd}(\cprops(X))=\psi$.
    \qed
  \end{itemize}
\end{proof}

\begin{corollary}
  \label{cor:adeq-cons}
  If \cprops\ is adequate for $\psi$ given \cdefs, then \cprops\ is consistent with \cdefs.
\end{corollary}

\begin{corollary}
  \label{cor:wlp1}
  Assume that \cprops\ is adequate for $\psi$ given \cdefs.
  For every choreography $C$, state $\Sigma$, and assignment $\rho$, if $\Sigma\Vdash_\rho\wlp(C,\psi)$ and $\tuple{C,\Sigma}\mlto\tuple{\nil,\Sigma'}$ for some state $\Sigma'$, then $\Sigma'\Vdash_\rho\psi$.
\end{corollary}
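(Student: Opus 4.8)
The plan is to obtain this corollary as a direct instantiation of the Soundness theorem (\Cref{thm:sound}), taking its precondition $\varphi$ to be $\wlp(C,\psi)$. The guiding observation is that the adequacy hypothesis bundles together precisely the two facts that \Cref{thm:sound} requires as input: a valid Hoare triple whose precondition is $\wlp(C,\psi)$, and consistency of the specification map \cprops\ with \cdefs. So rather than redoing any semantic reasoning, I would simply chain the results already in hand.

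First I would apply \Cref{lem:wlp1}, whose hypothesis (adequacy of \cprops\ for $\psi$ given \cdefs) holds by assumption, to obtain the derivation $\vdash_\cprops \hoare{\wlp(C,\psi)}C\psi$. Then I would invoke \Cref{cor:adeq-cons}, again from adequacy, to conclude that \cprops\ is consistent with \cdefs. At this point both premises of \Cref{thm:sound}, instantiated with $\varphi := \wlp(C,\psi)$, are established.

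Finally I would match the remaining data against \Cref{thm:sound}: the assumption $\Sigma\Vdash_\rho\wlp(C,\psi)$ is the instance $\Sigma\Vdash_\rho\varphi$ of its state-satisfaction premise, and the terminating run $\tuple{C,\Sigma}\mlto\tuple{\nil,\Sigma'}$ is its execution premise verbatim. Here it is essential to appeal to \Cref{thm:sound} rather than the head-transition \Cref{lem:sound}, since the corollary speaks about an arbitrary, possibly out-of-order, terminating execution $\mlto$; the reduction to head transitions via confluence has already been absorbed into \Cref{thm:sound}. The theorem then delivers $\Sigma'\Vdash_\rho\psi$, which is exactly the desired conclusion.

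I do not expect any genuine obstacle: the statement is labelled a corollary precisely because every ingredient has already been proved in \Cref{lem:wlp1}, \Cref{cor:adeq-cons}, and \Cref{thm:sound}. The only care needed is purely organisational, namely checking that the quantifiers line up — \Cref{thm:sound} is universally quantified over all states and assignments, so specialising it to the fixed $\Sigma$, $\rho$, and $\Sigma'$ of the corollary is immediate, and the choice $\varphi = \wlp(C,\psi)$ leaves nothing further to verify.
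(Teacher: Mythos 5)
Your proposal is correct and matches the paper's own proof exactly: both chain \Cref{lem:wlp1} (to get $\vdash_\cprops \hoare{\wlp(C,\psi)}C\psi$), \Cref{cor:adeq-cons} (to get consistency of \cprops\ with \cdefs), and then instantiate \Cref{thm:sound} with $\varphi = \wlp(C,\psi)$. Nothing to add.
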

\begin{proof}
  By \Cref{lem:wlp1}, $\vdash_\cprops \hoare{\wlp(C,\psi)}C\psi$.
  By \Cref{cor:adeq-cons}, \cprops\ is consistent with \cdefs.
  The thesis then follows by \Cref{thm:sound}.
  \qed
\end{proof}

\begin{lemma}
  \label{lem:wlp2}
  Assume that \cprops\ is adequate for $\psi$ given \cdefs.
  Let $C$ be a choreography, $\Sigma$ and $\Sigma'$ be states, and $\rho$ be an assignment.
  If $\tuple{C,\Sigma}\mhto\tuple{\nil,\Sigma'}$ and $\Sigma'\Vdash_\rho\psi$, then $\Sigma\Vdash_\rho\wlp(C,\psi)$.
\end{lemma}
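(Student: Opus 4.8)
The plan is to prove \Cref{lem:wlp2} by induction on the number of head transitions from $\tuple{C,\Sigma}$ to $\tuple{\nil,\Sigma'}$, mirroring the structure of the soundness proof (\Cref{lem:sound}) but running the reasoning in the opposite direction. The base case is when the number of transitions is $0$: then $C=\nil$ and $\Sigma=\Sigma'$, so $\Sigma\Vdash_\rho\psi$; since $\wlp(\nil,\psi)=\psi$ by definition, the thesis holds immediately. For the inductive step, I would suppose $\tuple{C,\Sigma}\hto\mu\tuple{C_0,\Sigma_0}\mhto\tuple{\nil,\Sigma'}$ and case-split on the first head transition, which—since head transitions disallow out-of-order execution—is determined by the syntactic shape of $C$.

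For each case I apply the induction hypothesis to the residual execution $\tuple{C_0,\Sigma_0}\mhto\tuple{\nil,\Sigma'}$ to obtain $\Sigma_0\Vdash_\rho\wlp(C_0,\psi)$, and then show $\Sigma\Vdash_\rho\wlp(C,\psi)$ using the relevant defining clause of $\wlp$. In the assignment case $C=\gencassign;C_0$, we have $\Sigma_0=\Sigma[\tuple{\pid p,x}\mapsto v]$ with $\eval e\Sigma pv$, and $\wlp(C,\psi)=\subst{\wlp(C_0,\psi)}pxpe$; \Cref{cor:norm} converts $\Sigma_0\Vdash_\rho\wlp(C_0,\psi)$ into $\Sigma\Vdash_\rho\subst{\wlp(C_0,\psi)}pxpe$, which is exactly the thesis. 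The communication case is analogous, using the $\subst{\cdot}qxpe$ variant of the same clause, and the selection case is trivial since neither the state nor $\wlp$ changes. For the conditional $C=\gencond$, the head transition fires \rname C{Then} or \rname C{Else} depending on whether $\eval b\Sigma p\true$ or $\false$; in the \true\ subcase I get $\Sigma\Vdash_\rho\wlp(C_1,\psi)$ from the induction hypothesis and $\Sigma\Vdash_\rho\loc pb\logeq X\true$ from \Cref{lem:norm} (choosing $\rho(\logvar X)=\true$), and these together entail the implication-conjunction that defines $\wlp(\gencond,\psi)$, with the other conjunct vacuously satisfied because its antecedent is false; the \false\ subcase is symmetric.

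The genuinely delicate case is the procedure call $C=X$, where the head transition is \rname C{Call}, giving $C_0=\cont X{\pn(\cdefs(X))\setminus\pid r}{\cdefs(X)}$ and $\Sigma_0=\Sigma$. Here $\wlp(X,\psi)=\m{fst}(\cprops(X))$, which by adequacy is logically equivalent to $\wlp(\cdefs(X),\psi)$. The induction hypothesis gives $\Sigma\Vdash_\rho\wlp(C_0,\psi)$, and since $\wlp(\cont X{\pids q}{\cdefs(X)},\psi)=\wlp(\cdefs(X),\psi)$ by the $\gencont$ clause, this reads $\Sigma\Vdash_\rho\wlp(\cdefs(X),\psi)$; the logical equivalence then yields $\Sigma\Vdash_\rho\m{fst}(\cprops(X))=\wlp(X,\psi)$. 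The subtlety is precisely that $\wlp(X,\psi)$ is defined to ignore $\psi$ and return $\m{fst}(\cprops(X))$, so the bridge back to $\psi$ is supplied entirely by the adequacy hypothesis rather than by unfolding; the $\cont X{\pids q}{\cdefs(X)}$ case (first transition by \rname C{Enter} or \rname C{Finish}) is handled the same way, as both leave the state fixed and $\wlp$ unchanged. I expect this interplay between adequacy and the $\psi$-independence of $\wlp(X,\psi)$ to be the main obstacle, since it is the one place where the argument is not a routine propagation of the induction hypothesis through a matching $\wlp$ clause.
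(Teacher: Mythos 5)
Your proof follows the paper's own strategy almost exactly: induction on the number of head transitions, with the first transition determined by the syntactic shape of $C$, \Cref{cor:norm} discharging the assignment and communication cases, \Cref{lem:norm} handling the conditional, and adequacy supplying the bridge in the procedure-call case. Your call case is also sound: the paper runs \rname C{Call}, \rname C{Enter} and \rname C{Finish} until the configuration reaches $\tuple{\cdefs(X),\Sigma}$ and applies the induction hypothesis there, whereas you take a single \rname C{Call} step and then appeal to the $\gencont$ clause of $\wlp$; the two are equivalent.

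There is, however, one genuine slip, in the conditional case. The assignment $\rho$ is fixed by the statement of the lemma; it occurs in both the hypothesis $\Sigma'\Vdash_\rho\psi$ and the conclusion $\Sigma\Vdash_\rho\wlp(C,\psi)$, so you are not free to ``choose'' $\rho(\logvar X)=\true$. Worse, because $\logvar X$ is fresh it does not occur in $\psi$, so the hypothesis places no constraint at all on $\rho(\logvar X)$: the conclusion must be established for \emph{every} value of $\rho(\logvar X)$. Your step ``\Cref{lem:norm} gives $\Sigma\Vdash_\rho\loc pb\logeq X\true$'' is valid only when $\rho(\logvar X)=\true$; for any other value that satisfaction claim is simply false. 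The conclusion still holds in those remaining cases, but for a reason your write-up never supplies: when $\rho(\logvar X)\neq\true$ (still assuming $\eval b\Sigma p\true$), the conjunct $\logvar X=\true$ in the first guard fails and, by \Cref{lem:norm} and determinism of evaluation, the conjunct $\loc pb=\logvar X$ in the second guard fails as well, so the antecedents of both implications in $\wlp(\gencond,\psi)$ are false and the formula holds vacuously. The paper makes exactly this case analysis on $\rho(\logvar X)$ explicit (``the only nontrivial case is when $\rho(\logvar X)=\true$\dots''). The fix is thus a one-line case split on $\rho(\logvar X)$ rather than a choice of its value; with that amendment your proof coincides with the paper's.
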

\begin{proof}
  By induction on the number of transitions from $C$ to $\nil$.
  If this number is $0$, then $C$ is $\nil$ and the thesis trivially follows.
  Otherwise, we detail some representative cases.
  We do case analysis on $C$ to determine the first transition.
  \begin{itemize}[nosep,noitemsep]
  \item If $C$ is $\gencassign;C''$, then $\tuple{C,\Sigma}\hto\gencondl\tuple{C'',\Sigma''}\mhto\tuple{\nil,\Sigma'}$, and $\Sigma''\Vdash_\rho\wlp(C'',\psi)$ by induction hypothesis.
    But $\Sigma''=\Sigma[\tuple{\pid p,x}\mapsto v]$ where $\eval e\Sigma pv$, hence $\Sigma\Vdash_\rho\subst{\wlp(C'',\psi)}pxpe$ by \Cref{cor:norm}, establishing the thesis.
  \item If $C$ is $\gencond$, then there are two cases.
    Assume wlog that $\eval b\Sigma p\true$.
    Then $\tuple{\gencond,\Sigma}\hto\gencondl\tuple{C_1,\Sigma}\mhto\tuple{\nil,\Sigma'}$, and $\Sigma\Vdash_\rho\wlp(C_1,\psi)$ by induction hypothesis.
    The only nontrivial case is when $\rho(\logvar X)=\true$ -- otherwise the antecedents of both implications in $\wlp(C,\psi)$ are false and the thesis trivially holds.
    If $\rho(\logvar X)=\true$, then $\Sigma\vdash_\rho\loc pb=\logvar X$ by \Cref{lem:norm}, and again both implications in $\wlp(C,\psi)$ are true (the first one has true premise and conclusion, while the premise in the second one is false).
    The case where $\eval b\Sigma p\false$ is analogous.
  \item If $C$ is $X$, then $\tuple{X,\Sigma}\mhto\tuple{\cdefs(X),\Sigma}\mhto\tuple{\nil,\Sigma'}$ by applying rules \rname C{Call}, \rname C{Enter} and \rname C{Finish} until all processes have entered $X$.
    By adequacy, $\m{fst}(\cprops(X))=\wlp(\cdefs(X),\psi)$, and the induction hypothesis establishes the thesis.
    \qed
  \end{itemize}
\end{proof}

\begin{corollary}
  \label{cor:wlp2}
  Assume that \cprops\ is adequate for $\psi$ given \cdefs.
  Let $C$ be a choreography, $\Sigma$ and $\Sigma'$ be states, and $\rho$ be an assignment.
  If $\tuple{C,\Sigma}\mlto\tuple{\nil,\Sigma'}$ and $\Sigma'\Vdash_\rho\psi$, then $\Sigma\Vdash_\rho\wlp(C,\psi)$.
\end{corollary}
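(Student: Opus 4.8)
The plan is to obtain this corollary as an immediate consequence of \Cref{lem:wlp2}, which establishes exactly the same implication but restricted to head executions ($\mhto$) rather than arbitrary out-of-order executions ($\mlto$). The whole task therefore reduces to converting the given general execution into a head execution with the same endpoints, and then quoting the lemma. This is the same strategy already used to derive \Cref{thm:sound} from \Cref{lem:sound}.

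First I would invoke the metatheoretical results from~\cite{CMP21} — combining deadlock-freedom with confluence — to argue that whenever $\tuple{C,\Sigma}\mlto\tuple{\nil,\Sigma'}$, we also have $\tuple{C,\Sigma}\mhto\tuple{\nil,\Sigma'}$. Intuitively, any terminating (out-of-order) execution can be rearranged into a head execution reaching the very same final state $\Sigma'$: confluence guarantees that the reordering does not change where we end up, and deadlock-freedom guarantees that the head strategy does not get stuck before reaching $\nil$. This is precisely the fact already exploited in the proof of \Cref{thm:sound}.

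Having produced a head execution $\tuple{C,\Sigma}\mhto\tuple{\nil,\Sigma'}$ with the same $\Sigma$ and $\Sigma'$, I would then apply \Cref{lem:wlp2} directly, using the hypothesis $\Sigma'\Vdash_\rho\psi$ and the assumption that $\cprops$ is adequate for $\psi$ given $\cdefs$, to conclude $\Sigma\Vdash_\rho\wlp(C,\psi)$, as required. I do not expect any genuine obstacle here: the inductive heavy lifting (the case analysis on the first instruction, the appeals to \Cref{cor:norm} and \Cref{lem:norm}, and the adequacy step for procedure calls) has all been carried out in \Cref{lem:wlp2}, and the reordering result is imported from prior work. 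The only point worth stating carefully is that the imported result preserves the terminal state $\Sigma'$ exactly — which is what lets the satisfaction hypothesis $\Sigma'\Vdash_\rho\psi$ transfer verbatim to the head execution.
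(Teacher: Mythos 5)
Your proposal is correct and matches the paper's own proof exactly: the paper likewise reduces $\tuple{C,\Sigma}\mlto\tuple{\nil,\Sigma'}$ to a head execution $\tuple{C,\Sigma}\mhto\tuple{\nil,\Sigma'}$ via the deadlock-freedom and confluence results of~\cite{CMP21} (just as in the proof of \Cref{thm:sound}) and then applies \Cref{lem:wlp2}. Nothing is missing.
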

\begin{proof}
  Combining \Cref{lem:wlp2} with deadlock-freedom and confluence of the semantics, as in the proof of \Cref{thm:sound}.
  \qed
\end{proof}

\subsection{Completeness}

Combining the results in the previous section, we obtain a completeness result for our calculus.

\begin{theorem}[Partial completeness]
  \label{thm:compl}
  Let $C$ be a choreography, $\varphi$ and $\psi$ be formulas, and assume that \cprops\ is adequate for $\psi$ given \cdefs.
  Assume that, for all states $\Sigma$ and $\Sigma'$ and assignment $\rho$, if $\Sigma\Vdash_\rho\varphi$ and $\tuple{C,\Sigma}\mlto\tuple{\nil,\Sigma'}$, then $\Sigma'\Vdash_\rho\psi$.
  Then $\vdash_\cprops \hoare\varphi C\psi$.
\end{theorem}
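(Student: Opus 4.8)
The plan is to prove partial completeness by reducing the semantic hypothesis to a statement about the weakest liberal precondition, and then invoking the derivability of the $\wlp$-based triple together with the weakening rule \rname H{Weak}. The key observation is that the hypothesis of the theorem says precisely that $\varphi$ is \emph{at least as strong} as $\wlp(C,\psi)$ in the relevant semantic sense, so that \rname H{Weak} can bridge the gap between the derivable triple $\hoare{\wlp(C,\psi)}{C}{\psi}$ (\Cref{lem:wlp1}) and the desired triple $\hoare{\varphi}{C}{\psi}$.

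The first step is to recall from \Cref{lem:wlp1} that, since $\cprops$ is adequate for $\psi$, we have $\vdash_\cprops \hoare{\wlp(C,\psi)}{C}{\psi}$. Next I would show the semantic entailment $\decth \models \varphi \to \wlp(C,\psi)$. To do this, fix an arbitrary state $\Sigma$ and assignment $\rho$ with $\Sigma \Vdash_\rho \varphi$; I must deduce $\Sigma \Vdash_\rho \wlp(C,\psi)$. If $C$ has no terminating execution from $\Sigma$, the entailment into $\wlp(C,\psi)$ must still be argued --- and here lies the delicate point, since a purely semantic reading of $\wlp(C,\psi)$ as "the set of states from which $C$ guarantees $\psi$" would fail to characterise nonterminating runs. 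If instead $\tuple{C,\Sigma}\mlto\tuple{\nil,\Sigma'}$ for some $\Sigma'$, then the theorem's hypothesis gives $\Sigma'\Vdash_\rho\psi$, and \Cref{cor:wlp2} immediately yields $\Sigma\Vdash_\rho\wlp(C,\psi)$, as desired.

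Having established $\decth \models \varphi \to \wlp(C,\psi)$, the proof concludes by applying \rname H{Weak} with the trivial right-hand entailment $\decth \models \psi \to \psi$: from $\vdash_\cprops \hoare{\wlp(C,\psi)}{C}{\psi}$ we obtain $\vdash_\cprops \hoare{\varphi}{C}{\psi}$, which is the thesis.

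The step I expect to be the main obstacle is the treatment of nonterminating (or deadlocked) executions when proving $\decth \models \varphi \to \wlp(C,\psi)$. For a \emph{liberal} precondition this should be unproblematic --- since the triple only constrains \emph{terminating} runs, a state from which $C$ does not terminate vacuously satisfies the obligation --- but one must check that the syntactic formula $\wlp(C,\psi)$ genuinely holds at such states, rather than merely being vacuously "guaranteed" in the semantic sense. The cleanest route is to avoid reasoning directly about the set of terminating states and instead route everything through \Cref{cor:wlp2}, which is stated for the existential "if some terminating run exists" and so sidesteps the nontermination issue entirely: when no terminating run exists, there is simply nothing to prove about $\Sigma'$, and one needs a separate argument (or a direct structural fact about $\wlp$) to confirm $\Sigma\Vdash_\rho\wlp(C,\psi)$ in that case.
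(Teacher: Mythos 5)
Your proposal is essentially the paper's own proof: invoke \Cref{lem:wlp1} to get $\vdash_\cprops \hoare{\wlp(C,\psi)}C\psi$, use \Cref{cor:wlp2} to turn the semantic hypothesis into the entailment $\decth\models\varphi\to\wlp(C,\psi)$, and conclude with \rname H{Weak} (with the trivial entailment $\decth\models\psi\to\psi$ on the right). The one point worth commenting on is the ``delicate point'' you flag, namely states $\Sigma$ satisfying $\varphi$ from which $C$ has no terminating execution. You are right that \Cref{cor:wlp2} is silent there (its hypothesis requires an actual terminating run), and you are also right that one would need a separate argument that such $\Sigma$ satisfy the \emph{syntactic} formula $\wlp(C,\psi)$ --- this does not follow from the reading of $\wlp$ as a liberal precondition, since for instance $\wlp(X,\psi)=\m{fst}(\cprops(X))$ can be a strong formula even when $X$ diverges. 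Notably, the paper's proof does not address this case at all: it applies \Cref{cor:wlp2} to every state satisfying the hypothesis as if a terminating run always existed. So your attempt is, if anything, more careful than the published proof; the case you isolate is a genuine loose end in the paper's argument, not a defect of your approach relative to it.
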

\begin{proof}
  Let $\Sigma$ be a state such that $\tuple{C,\Sigma}\mlto\tuple{\nil,\Sigma'}$, implies $\Sigma'\Vdash_\rho\psi$.
  Then $\Sigma\Vdash_\rho\wlp(C,\psi)$ by \Cref{cor:wlp2}.
  Since this is the case for all states $\Sigma$ such that $\Sigma\Vdash_\rho\varphi$, it follows that $\decth\Vdash\varphi\to\wlp(C,\psi)$.
  But $\vdash_\cprops \hoare{\wlp(C,\psi)}C\psi$ by \Cref{lem:wlp1}, whence by \rname H{Weak} the thesis holds.
  \qed
\end{proof}

\Cref{thm:sound,thm:compl} can be combined with the EPP theorem from~\cite{CMP21}, which relates the behaviour of choreographies with the behaviour of their projections, to yield results on execution of distributed implementations generated by choreographies.
This means that properties of these implementations can be analysed at the choreographic level, which is arguably simple, without the need for a specialised Hoare calculus for process languages.

\subsection{Decidability}

Finally we establish some decidability results for the Hoare calculus.
We start by pointing out that we assume $\decth$ is decidable; since propositional logic is decidable and evaluation converges, the judgments of the form $\decth\models\varphi$ that appear on the premises of rule \rname H{Weak} are also decidable.

\begin{lemma}
  \label{lem:dec1}
  The judgement $\vdash_\cprops \hoare\varphi C\psi$ is decidable.
\end{lemma}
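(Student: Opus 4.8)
The plan is to show that proof search in the calculus of \cref{fig:inf} terminates and reduces to finitely many queries to the (assumed decidable) theory \decth. The key structural observation is that every rule except \rname H{Weak} is determined by the top-level syntactic shape of the choreography $C$: the form of $C$ dictates which of \rname H{Nil}, \rname H{Assign}, \rname H{Com}, \rname H{Sel}, \rname H{Cond}, \rname H{Call} or \rname H{Call'} may conclude a weakening-free step. Moreover the premises of these rules always concern strictly smaller choreographies --- the continuation of an instruction, the two branches of a conditional, or the stored body of a running procedure --- with the crucial exception of \rname H{Call}, which treats a procedure name $X$ as a leaf by reading $\cprops(X)$ rather than unfolding $\cdefs(X)$. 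It is precisely this that makes the recursion well-founded on the size of $C$ and rules out divergence through recursive procedures.

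The central computational tool is the function $\wlp(C,\psi)$, which is defined by structural recursion on $C$ and is therefore computable for any $\cprops$, independently of whether $\cprops$ is adequate for $\psi$. I would prove the characterisation that $\vdash_\cprops\hoare\varphi C\psi$ holds if and only if (i) $\decth\models\varphi\to\wlp(C,\psi)$ and (ii) for every occurrence of a bare procedure call $X$ reached in the structural decomposition of $C$, $\decth\models\m{snd}(\cprops(X))\to\psi$. Both families of conditions are finite --- because $C$ is finite and procedures are not unfolded --- and each is a \decth-validity check, hence decidable by assumption (using also that propositional reasoning is decidable and evaluation converges). Condition (ii) records the postcondition constraint imposed by \rname H{Call}, which $\wlp$ deliberately forgets in its clause $\wlp(X,\psi)=\m{fst}(\cprops(X))$; note that, since every rule other than \rname H{Weak} leaves the postcondition unchanged between conclusion and premises, the postcondition threaded down to each such leaf is always the root postcondition $\psi$.

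For the ``if'' direction I would adapt the proof of \Cref{lem:wlp1}: that argument already derives $\vdash_\cprops\hoare{\wlp(C,\psi)}C\psi$, and the only place it invokes adequacy is the procedure-call case, where it needs $\m{snd}(\cprops(X))=\psi$; here I replace that appeal by condition (ii) together with one application of \rname H{Weak} to weaken the postcondition from $\m{snd}(\cprops(X))$ to $\psi$. A final \rname H{Weak} using (i) then yields $\vdash_\cprops\hoare\varphi C\psi$. For the ``only if'' direction I would argue by induction on the derivation, the crux being the treatment of \rname H{Weak}: two consecutive weakenings collapse into one by transitivity of implication in \decth, so a derivation can be normalised to alternate structural steps with at most one weakening, after which each structural step matches the corresponding clause of $\wlp$ and each \rname H{Call} leaf yields exactly obligation (ii).

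The main obstacle I anticipate is this normalisation lemma and the careful bookkeeping of postconditions it entails: I must show that weakenings can always be collapsed and aligned with structural steps, and that doing so preserves the threaded postcondition so that the leaf obligations are exactly those of (ii). Once this is established, the decision procedure is simply to compute $\wlp(C,\psi)$ by recursion, collect the finitely many leaf obligations along the way, and discharge all the resulting \decth-implications --- each step effective, and the whole terminating because \rname H{Call} never unfolds a definition.
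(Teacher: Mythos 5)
Your proposal is correct, but it takes a genuinely different route from the paper, and in fact proves something slightly more general. The paper's argument is semantic: it combines soundness (\Cref{thm:sound}) with \Cref{cor:wlp2} in one direction, and \Cref{cor:wlp1} with partial completeness (\Cref{thm:compl}) in the other, to conclude that $\vdash_\cprops \hoare{\varphi}{C}{\psi}$ iff $\decth\models\varphi\to\wlp(C,\psi)$; decidability then follows because $\wlp$ is computable and $\decth$ is decidable. This makes the lemma a three-line corollary of results already proved, but it silently inherits their hypothesis that \cprops\ is adequate for $\psi$ given \cdefs\ (and, indirectly, the confluence and deadlock-freedom machinery behind \Cref{thm:sound}). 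Your proof-theoretic characterisation --- validity of $\varphi\to\wlp(C,\psi)$ plus the obligation $\decth\models\m{snd}(\cprops(X))\to\psi$ at every call leaf --- needs no such assumption, which is arguably the natural level of generality, since the judgement $\vdash_\cprops$ never consults \cdefs\ at all. Under adequacy your condition (ii) collapses to $\psi\to\psi$, so the two characterisations coincide in the paper's setting; without adequacy, (ii) is essential (e.g.\ with $\cprops(X)=\tuple{\top,\top}$ and $\psi=\bot$, the implication $\varphi\to\wlp(X,\bot)$ is valid yet $\hoare{\varphi}{X}{\bot}$ is not derivable), so your leaf obligations are exactly the right correction. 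The price you pay is the normalisation argument you anticipate, plus two auxiliary facts your sketch uses silently and should state explicitly: (a) $\decth$-validity of implications is preserved by localised substitution (needed to push the induction through the \rname H{Assign} and \rname H{Com} cases, provable via \Cref{cor:norm}); and (b) $\wlp$ is monotone in the postcondition, i.e.\ $\decth\models\psi'\to\psi$ implies $\decth\models\wlp(C,\psi')\to\wlp(C,\psi)$ --- this is what the \rname H{Weak} case of your ``only if'' direction needs when a weakening changes the postcondition, and it holds precisely because the call clause of $\wlp$ discards the postcondition.
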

\begin{proof}
  Assume that $\vdash_\cprops \hoare\varphi C\psi$.
  By \Cref{thm:sound}, for every state $\Sigma$ and assignment $\rho$ such that $\Sigma\Vdash_\rho\varphi$ it is the case that: if $\tuple{C,\Sigma}\mlto\tuple{\nil,\Sigma'}$, then $\Sigma'\Vdash_\rho\psi$.
  By \Cref{cor:wlp2}, this means that $\Sigma\Vdash_\rho\wlp(C,\psi)$, and therefore $\decth\models\varphi\to\wlp(C,\psi)$.

  Conversely, if $\decth\models\varphi\to\wlp(C,\psi)$, then for every state $\Sigma$ and assignment $\rho$ such that $\Sigma\Vdash_\rho\varphi$ it is the case that $\Sigma\Vdash_\rho\wlp(C,\psi)$, and therefore if $\tuple{C,\Sigma}\mlto\tuple{\nil,\Sigma'}$ it must hold that $\Sigma'\Vdash_\rho$ by \Cref{cor:wlp1}.
  By \Cref{thm:compl} this means that $\vdash_\cprops \hoare\varphi C\psi$.

  This shows that $\vdash_\cprops \hoare\varphi C \psi$ iff $\decth\models\varphi\to\wlp(C,\psi)$.
  Since $\wlp$ is computable and validity is decidable, it follows that $\vdash_\cprops \hoare\varphi C\psi$ is decidable.
  \qed
\end{proof}

Although the set of procedure names can in principle be infinite, most practical applications only use a finite subset of them.\footnote{This disallows choreographies where e.g.~each procedure $X_i$ calls procedure $X_{i+1}$, which do not occur in practice.}
In this case, consistency and adequacy also become decidable.
\begin{corollary}
  If the set of procedure names is finite, then consistency between a procedure specification map \cprops\ and a set of procedure definitions \cdefs\ is decidable.
\end{corollary}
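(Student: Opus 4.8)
The plan is to reduce consistency to a finite collection of already-decidable judgements. Recall that \cprops\ is consistent with \cdefs\ precisely when $\vdash_\cprops \hoare{\m{fst}(\cprops(X))}{\cdefs(X)}{\m{snd}(\cprops(X))}$ holds for \emph{every} procedure name $X$. When the set of procedure names is finite, this is a conjunction of finitely many judgements, each of the form $\vdash_\cprops \hoare\varphi C\psi$.

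The key observation is that each individual conjunct is decidable by \Cref{lem:dec1}. First I would note that \Cref{lem:dec1} establishes decidability of the judgement $\vdash_\cprops \hoare\varphi C\psi$ for a fixed specification map \cprops, a fixed choreography $C$, and fixed formulas $\varphi$ and $\psi$ --- which is exactly the shape of each conjunct, taking $C = \cdefs(X)$, $\varphi = \m{fst}(\cprops(X))$, and $\psi = \m{snd}(\cprops(X))$. Since \cprops\ and \cdefs\ are part of the input, these data are all available for each $X$.

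The argument then concludes by appealing to closure of decidability under finite conjunction: to decide consistency, one enumerates the finitely many procedure names $X$, invokes the decision procedure from \Cref{lem:dec1} on the corresponding triple, and returns \emph{true} exactly when all invocations succeed. This terminates because there are finitely many names and each check terminates.

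I expect no genuine obstacle here, since the corollary is essentially a bookkeeping consequence of \Cref{lem:dec1}. The only subtlety worth flagging is the implicit reliance on the finiteness hypothesis: without it, consistency would require checking infinitely many judgements, and --- as the footnote in the paper observes regarding chains of procedures $X_i$ calling $X_{i+1}$ --- there is no a priori bound allowing the infinite conjunction to be collapsed to a finite one. Thus the finiteness assumption is doing the real work, and the proof is just the remark that a finite conjunction of decidable predicates is decidable.
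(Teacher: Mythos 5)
Your proof is correct and matches the paper's intended argument: the corollary is stated there without an explicit proof precisely because it follows immediately from \Cref{lem:dec1}, by checking the finitely many judgements $\vdash_\cprops \hoare{\m{fst}(\cprops(X))}{\cdefs(X)}{\m{snd}(\cprops(X))}$, one per procedure name, and using closure of decidability under finite conjunction. Your observation that the finiteness hypothesis is what makes this reduction possible is likewise exactly the point of the paper's surrounding discussion.
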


\begin{lemma}
  If the set of procedure names is finite, then adequacy of a procedure specification map for a formula and set of procedure definitions is decidable.
\end{lemma}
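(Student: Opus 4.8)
The plan is to recall the definition of adequacy and reduce it to a finite number of decidable checks. Recall that a procedure specification map $\cprops$ is adequate for $\psi$ given $\cdefs$ when, for every procedure name $X$, two conditions hold: first, $\m{fst}(\cprops(X))$ is logically equivalent to $\wlp(\cdefs(X),\psi)$; and second, $\m{snd}(\cprops(X))=\psi$. The second condition is a purely syntactic equality of formulas, and is trivially decidable. The first condition is a logical equivalence between two formulas, which by definition means $\decth\models \m{fst}(\cprops(X))\leftrightarrow\wlp(\cdefs(X),\psi)$; since we have already assumed $\decth$ is decidable and observed that validity of state-logic formulas is decidable (as in the preamble to \Cref{lem:dec1}), this check is decidable for each fixed $X$.

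The key step is then to quantify over procedure names. Since the set of procedure names is assumed finite, adequacy amounts to a conjunction of finitely many decidable conditions, one pair for each $X$. First I would note that the value $\wlp(\cdefs(X),\psi)$ is effectively computable for each $X$: by the definition of $\wlp$, this is a structural recursion on the choreography $\cdefs(X)$, and crucially the clause $\wlp(X',\psi)=\m{fst}(\cprops(X'))$ for an inner procedure call simply reads off the (given) first component of $\cprops(X')$ rather than recursing into the definition of $X'$. Hence the computation of $\wlp(\cdefs(X),\psi)$ terminates regardless of recursion among procedures, producing a concrete formula. With that formula in hand, I would decide the equivalence $\decth\models \m{fst}(\cprops(X))\leftrightarrow\wlp(\cdefs(X),\psi)$ and the syntactic equality $\m{snd}(\cprops(X))=\psi$, and return the conjunction of all these outcomes over the finitely many $X$.

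I expect no genuine obstacle here, as the result is essentially a corollary of the machinery already developed: the only ingredients are the computability of $\wlp$ (established implicitly by its defining recursion and used in \Cref{lem:dec1}), the decidability of validity in the state logic, and the finiteness hypothesis on procedure names. The one point deserving care is confirming that $\wlp(\cdefs(X),\psi)$ is computable even in the presence of mutual or self-recursion among procedures; but this follows precisely because the $\wlp$ clause for a procedure call does not unfold the called procedure's definition, so the recursion in the computation of $\wlp$ is bounded by the syntactic size of the single choreography $\cdefs(X)$ and always terminates. Given this, the proof is a short assembly of the finitely many decidable subchecks.
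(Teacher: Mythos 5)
Your proposal is correct and follows the same route as the paper, whose entire proof is ``Immediate from the definition'': a finite conjunction, over the finitely many procedure names, of a syntactic equality check and a logical-equivalence check decided via the decidable theory $\decth$, with $\wlp(\cdefs(X),\psi)$ computable because its clause for a procedure call reads $\m{fst}(\cprops(\cdot))$ rather than unfolding the definition. You simply spell out the details the paper leaves implicit, including the worthwhile observation that recursion among procedures cannot make the $\wlp$ computation diverge.
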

\begin{proof}
  Immediate from the definition.
  \qed
\end{proof}

We end this section with a negative result: it is not possible to compute an adequate procedure specification map.
\begin{lemma}
  There is no algorithm that, given a set of procedure definitions \cdefs\ and a formula $\psi$, always returns a procedure specification map \cprops\ that is adequate for $\psi$ given \cdefs.
\end{lemma}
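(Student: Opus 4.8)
The plan is to reduce the halting problem to the existence of such an algorithm, mirroring the classical fact that procedure summaries (weakest liberal preconditions) are not computable. Suppose, towards a contradiction, that an algorithm $A$ exists that, on input $\cdefs$ and $\psi$, always returns a procedure specification map $\cprops$ adequate for $\psi$ given $\cdefs$. The key observation is that such an $A$ would let us decide \emph{semantic} Hoare triples without being handed $\cprops$ in advance: given $C$, $\varphi$ and $\psi$, we could compute $\cprops = A(\cdefs,\psi)$ and then, since $\cprops$ is adequate (hence consistent by \Cref{cor:adeq-cons}), decide the judgement $\vdash_\cprops \hoare\varphi C\psi$ by \Cref{lem:dec1}. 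By \Cref{thm:sound,thm:compl}, this judgement holds if and only if the semantic statement ``for all $\Sigma$, $\Sigma'$ and $\rho$, if $\Sigma\Vdash_\rho\varphi$ and $\tuple{C,\Sigma}\mlto\tuple{\nil,\Sigma'}$ then $\Sigma'\Vdash_\rho\psi$'' is true. Thus $A$ would yield a decision procedure for this semantic partial-correctness statement, uniformly in $C$, $\varphi$ and $\psi$.

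It then remains to pick $\varphi$, $\psi$ and a family of choreographies for which this statement is undecidable. I would take $\varphi$ to be any valid formula $\top$ and $\psi$ to be an unsatisfiable formula $\bot$, both expressible since the state logic is closed under the Boolean connectives (for instance $\bot = (\true = \logvar X)\wedge(\false = \logvar X)$, which no assignment can satisfy). For these choices the semantic statement reads ``for all $\Sigma$, if $\tuple{C,\Sigma}\mlto\tuple{\nil,\Sigma'}$ then $\Sigma'\Vdash_\rho\bot$'', i.e. it asserts that $C$ has \emph{no} terminating run from \emph{any} initial state. Deciding $\vdash_\cprops \hoare\top C\bot$ therefore amounts to deciding whether $C$ can ever terminate.

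The final ingredient is an encoding of an undecidable problem as termination of a choreography. Given a Turing machine $M$, I would build a set of procedure definitions $\cdefs_M$ with a distinguished procedure $X_M$ whose body simulates $M$ on the empty input: it first resets the process variables encoding $M$'s configuration to the initial configuration (so that the run does not depend on the ambient state), then performs one transition of $M$ and either reaches $\nil$, if $M$ has halted, or calls $X_M$ again. Recursion, conditionals and a signature $\Xi$ expressive enough for arithmetic suffice for this simulation; indeed the zero-search protocol of \cref{ex:zeros-1} already realises exactly this kind of unbounded, state-driven iteration. With $C = X_M$, the choreography terminates from some (equivalently, every) state precisely when $M$ halts on the empty input, so ``$C$ has no terminating run from any state'' holds if and only if $M$ does not halt --- an undecidable property. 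Combining $A$ with \Cref{lem:dec1,thm:sound,thm:compl} would decide it, a contradiction.

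The main obstacle is the encoding step: one must argue that this choreographic language is computationally universal, so that termination is genuinely undecidable, and that the simulation can be made independent of the initial state, so that termination becomes a property of $M$ alone rather than of $\Sigma$. Once the language is fixed to admit such an $M$ (the standing assumption under which the result is meant to hold, and which \cref{ex:zeros-1} shows is mild), the remaining steps are routine bookkeeping, as the soundness, completeness and decidability machinery relating $\vdash_\cprops$ to the semantics is already in place.
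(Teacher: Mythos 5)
Your proposal is correct and follows essentially the same route as the paper: both fix $\psi=\bot$, observe that an adequate \cprops\ would make the triple $\hoare\top C\bot$ (equivalently, divergence of $C$ from every state) decidable via the $\wlp$/decidability machinery, and then contradict the undecidability of choreography non-termination. The only cosmetic difference is that the paper invokes Rice's Theorem, taking the language's Turing completeness from the cited formalisation~\cite{CMP21}, where you propose to build the Turing-machine simulation explicitly.
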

\begin{proof}
  Consider the formula $\psi=\bot$, which never holds.
  For any choreography $C$ and satisfiable formula $\varphi$, the judgement \hoare\varphi C\bot\ holds iff $C$ never terminates from a state that satisfies $\varphi$.

  This means that, if \cprops\ is adequate for $\bot$ given \cdefs, then $\wlp(C,\bot)$ characterises the set of states from which execution of $C$ diverges.
  In particular, $C$ never terminates if $\wlp(C,\bot)$ is logically equivalent to $\top$ -- which is decidable in our state logic.
  But Rice's Theorem implies that the class of choreographies that always diverge is undecidable, therefore \cprops\ cannot be computable.
  \qed
\end{proof}

Although this result states that adequate procedure specification maps are in general not computable, there is still the possibility that they can be shown to exist always.
Such a result would entail that our calculus is strongly complete.
We plan to investigate this issue in future work.

\section{Related Work}
\label{sec:related}
The work nearest to ours is~\cite{JV22}, where the authors propose a system for functional correctness of choreographies aimed at reasoning about distributed choices.
While they also propose a Hoare calculus for choreographies, there are some key differences wrt our work.

Firstly, they introduce a new choreographic language with significant differences from common practice in choreographic programming, e.g., they require every choice to involve every process regardless of their involvement in the branches in the condition.
By contrast, we used an existing language with standard constructs.

Secondly, the logic used in \cite{JV22} is fixed and used in the choreography language for Boolean expressions.
This coupling compromises the generality of the development, because the logic and the syntax of choreographies are not standalone.
Instead, we follow the standard two-layered approach for Hoare logic~\cite{AO19,H69}, and define a state logic that is parametric on both the language of expressions in the choreographies and the theory for reasoning about them.

As a consequence, our development is more readily applicable and adaptable to other existing choreographic languages.

The only other work combining choreographies and logic is Linear Compositional Choreographies (LCC) \cite{CMS18}, a proof theory based on linear logic for reasoning about programs that modularly combine compositional choreographies \cite{MY13} with processes.
This was inspired by previous work on the correspondence between linear propositions and session types \cite{CP10}. LCC, however, is not aimed at functional correctness: propositions represent communication behaviour rather than assertions about states.

Design-by-Contract \cite{M92} is a framework where each protocol or function is given a contract specifying its allowed input and resulting output, similar to the pre- and postconditions of Hoare logic, which has been used to reason about distributed programs from a global level.
  The first work in this line~\cite{BHTY10} defined a framework for specifying contracts for multiparty sessions.
  Being based on session types, this work more focussed on specifying properties of communicated values than ours, which lets them specify more properties than us, but also requires adding annotations to the language being reasoned about.
  An extension of this idea~\cite{MP17} describes chaperone contracts for higher-order binary sessions, which lets contracts update dynamically at runtime.
  Design-by-Contract has also been applied to microservices in the form of Whip \cite{WCD17}.
  Like our work, Whip is language-agnostic with regard to the local language, though it uses global contracts to reason directly on the local language; unlike our logic, Whip is designed for monitoring communications at runtime.

Another way of reasoning about session types is combining them with dependent types \cite{TCP11}. Like the work of \cite{BHTY10}, dependent types can be used to reason about the values being communicated, but unlike our work they are not intended to reason about pre- and postconditions.

Hoare logic has also been used to reason directly about systems of communicating processes~\cite{ANW80,LG81}.
This is far more complex than reasoning about choreographies, as it requires independently considering properties of each participant's protocol and how they are combined in the global system.

\section{Conclusions}
We have presented a novel Hoare calculus for reasoning about choreographic programs.
Our logic allows for a great deal of flexibility, since it is parametric on both the local language of the choreographic language and a decidable theory defined by the user.

We have proven that the standard properties of Hoare logics hold for our language.
Using the operational correspondence theorems for choreographies and their projections, we also showed that any properties that our logic can prove for a choreography also hold for the distributed implementation automatically generated from that choreography.

Our section on decidability left open the question of whether there always exists an adequate procedure specification map for any target formula, which we plan to investigate in future work.
We also want to look further into the issue of how our decidability results can be used to implement interesting algorithms, e.g.~for proof automation.

Our formalism only gives us guarantees for terminating execution paths, which means that we cannot infer any properties of non-terminating choreographies.
However, an inspection of the proofs of soundness and completeness (in particular, \Cref{lem:sound,lem:wlp1}) shows that these results actually guarantee something stronger, namely that the invariants described in \cprops\ must hold whenever the choreography reaches a procedure call.
We plan to use this observation as a starting point for an investigation about how our calculus can be used to assert properties of non-terminating executions of choreographies.

\paragraph{Acknowledgements.}
This work was partially supported by Villum Fonden, grant nr 29518.

\end{document}